\documentclass[%draft,
a4paper,reqno]{amsart}
\usepackage[T1]{fontenc}
\usepackage[english]{babel}
\usepackage{amssymb,bbm,enumerate}
\usepackage{color}
\usepackage[linktocpage=true,colorlinks=true, linkcolor=blue, citecolor=red, urlcolor=green]{hyperref}
\usepackage{mathtools}
\usepackage{float}
\restylefloat{table}

%%% to remove in the final version %%%%%%%%%%%%%%%%%%%%%%%%%%%
% to show overfull %%%%%%
%\setlength\overfullrule{5pt}
%%REPLACED BY DRAFT IN THE DOCUMENTCLASS

%\usepackage[notcite,notref]{showkeys}

\renewcommand{\phi}{\varphi}
\renewcommand{\ker}{\Ker}

\newcommand{\mc}[1]{\mathcal{#1}}
\newcommand{\mf}[1]{\mathfrak{#1}}
\newcommand{\mb}[1]{\mathbb{#1}}
\newcommand{\tint}{{\textstyle\int}}

\DeclareMathOperator{\Mat}{Mat}

\DeclareMathOperator{\ad}{ad}
\DeclareMathOperator{\im}{Im}
\DeclareMathOperator{\Der}{Der}
\DeclareMathOperator{\Ker}{Ker}
\DeclareMathOperator{\sdeg}{sdeg}
\DeclareMathOperator{\rank}{rank}

\newcommand{\ass}[1]{\stackrel{#1}{\longleftrightarrow}}

\theoremstyle{plain}
\newtheorem{theorem}{Theorem}[section]

\newtheorem{proposition}[theorem]{Proposition}

\theoremstyle{definition}
\newtheorem{definition}[theorem]{Definition}

\theoremstyle{remark}
\newtheorem{remark}[theorem]{Remark}

\setcounter{tocdepth}{1}
\setcounter{section}{-1}

\numberwithin{equation}{section}

%%% Definizione commento
\definecolor{light}{gray}{.9}

%%%%%%%%%%%%%%%%%%%%%%%%%%%%%%%%%%%%%%%%%%%%%%%%%
%%%%%%%%%%%%%%%%%%%%%%%%%%%%%%%%%%%%%%%%%%%%%%%%%

\title{
Integrability of Dirac reduced bi-Hamiltonian equations
}

\author{Alberto De Sole, Victor G. Kac, Daniele Valeri}

\address{Dipartimento di Matematica, Sapienza Universit\`a di Roma,
P.le Aldo Moro 2, 00185 Rome, Italy.}
\email{desole@mat.uniroma1.it}

\address{Department of Mathematics, MIT,
77 Massachusetts Avenue, Cambridge, MA 02139, USA.}
\email{kac@math.mit.edu}

\address{SISSA, Via Bonomea 265, 34136 Trieste, Italy.}
\email{dvaleri@sissa.it}

%%%%%%%%%%%%%%%%%%%%%%%%%%%%%%%%%%%%%%%%%%%%%%%%%%%%%%%%
%%%%%%%%%%%%%%%%%%%%%%%%%%%%%%%%%%%%%%%%%%%%%%%%%%%%%%%%
\begin{document}

\pagestyle{plain}

\begin{abstract}
First, we give a brief review of the theory of the Lenard-Magri scheme for a non-local
bi-Poisson structure and of the theory of Dirac reduction.
These theories are used in the remainder of the paper to prove integrability of three
hierarchies of bi-Hamiltonian PDE's,
obtained by Dirac reduction from some generalized Drinfeld-Sokolov hierarchies.
\end{abstract}

\maketitle

%\tableofcontents

%%%%%%%%%%%%%%%%%%%%%%%%%%%%%%%%%%%%%%%
\section{Introduction}\label{sec:0}

It has been demonstrated in a series of papers \cite{BDSK09,DSK13,DSKV13a,DSKV13b,DSKV13c}
that the framework of Poisson vertex algebras is extremely useful
for the theory of Hamiltonian PDE's.
For example, the theories of non-local Poisson structures \cite{DSK13},
and of the infinite dimensional Dirac reduction \cite{DSKV13c},
have been developed in this framework.
Moreover, this languages turned out to be very convenient
not only for the development of the general theory,
but also for the study of concrete bi-Hamiltonian systems,
like the generalized Drinfeld-Sokolov (DS) hierarchies,
considered in \cite{DSKV13a,DSKV13b}.
In these two papers we studied in more detail
three integrable bi-Hamiltonian hierarchies:
the homogeneous DS hierarchy, associated to a simple Lie algebra $\mf g$,
studied already in \cite{DS85},
and the generalized DS hierarchies attached to a minimal
and to a short nilpotent element of $\mf g$.
We also considered the Dirac reductions of the last two hierarchies
by elements of conformal weight 1.
In the case of a ``short'' hierarchy
we thus obtain Svinolupov's integrable hierarchy \cite{Svi91},
constructing thereby (non-local) bi-Poisson structures for them.
However, it is not at all clear (and probably false in general)
that the equations obtained by Dirac reduction from integrable bi-Hamiltonian equations
remain bi-Hamiltonian integrable.
We were able to prove this in \cite{DSKV13c}
for the reduced ``minimal'' hierarchy only in the first non-trivial case
of $\mf g=\mf{sl}_3$.

In the present paper, using the theory of singular degree of a rational matrix pseudodifferential
operator \cite{CDSK13c}, we prove integrability of the reduced ``minimal'' and ``short''
hierarchies for arbitrary $\mf g$.
Furthermore, considering Dirac reduction of the homogeneous DS hierarchy,
associated to a fixed regular element $s$ in a Cartan subalgebra $\mf h$ of $\mf g$,
we prove integrability of the following bi-Hamiltonian PDE, for all $a\in\mf h$:
\begin{equation}\label{eq1.1}
\frac{de_\alpha}{dt}=
\frac{\alpha(a)}{\alpha(s)}
e_\alpha^\prime
+\sum_{\beta\in\Delta\backslash\{-\alpha\}}
\frac{\beta(a)}{\beta(s)}
e_{-\beta}[e_\beta,e_\alpha]
\,\,,\,\,\,\,\alpha\in\Delta
\,,
\end{equation}
where $\Delta$ is the root system of $\mf g$
and $\{e_\alpha\}_{\alpha\in\Delta}$ are root vectors
such that $(e_\alpha|e_{-\alpha})=1$
with respect to an invariant non-degenerate bilinear form $(\cdot\,|\,\cdot)$ on $\mf g$.
Equation \eqref{eq1.1} is bi-Hamiltonian with respect to the following two compatible Poisson 
structures ($\alpha,\beta\in\Delta$):
\begin{equation}\label{eq1.2}
(H_0)_{\alpha,\beta}(\partial)=\delta_{\alpha,-\beta}\beta(s)
\,,
\end{equation}
and
\begin{equation}\label{eq1.3}
\begin{array}{l}
\vphantom{\Big)}
\displaystyle{
(H_1)_{\alpha,\beta}(\partial)=
[e_\beta,e_\alpha]-(\alpha|\beta)e_\alpha\partial^{-1}\circ e_\beta
\,\,\text{ for } \beta\neq-\alpha
\,,}\\
\vphantom{\Big)}
\displaystyle{
(H_1)_{\alpha,-\alpha}(\partial)=
\partial+(\alpha|\alpha)e_\alpha\partial^{-1}\circ e_{-\alpha}
\,.}
\end{array}
\end{equation}
The corresponding first two conserved Hamiltonian densities are
\begin{equation}\label{eq1.4}
h_0=a
\,\,,\,\,\,\,
h_1=\frac12\sum_{\alpha\in\Delta}
\frac{\alpha(a)}{\alpha(s)}
e_\alpha e_{-\alpha}\,.
\end{equation}

The proof of integrability in all cases is based on the Lenard-Magri scheme of integrability
for non-local bi-Poisson structures, developed in \cite{DSK13}.

%%%%%%%%%%%%%%%%%%%%%%%%%%%%%%%
\section{Non-local Poisson structures and Hamiltonian equations}\label{sec:2}

%%%
\subsection{Evolutionary vector fields, Frechet derivatives and variational derivatives}\label{sec:2.1}

Let $\mc V$ be the algebra of differential polynomials in $\ell$ variables:
$\mc V=\mb F [u_i^{(n)}\, |\, i \in I,n \in \mb Z_+]$, where $I=\{1,\dots,\ell\}$,
over a field $\mb F$ of characteristic $0$.
(In fact, most of the results hold in the generality of algebras 
of differential functions, as defined in \cite{DSK13}.)
It is a differential algebra with derivation defined by
$\partial(u_i^{(n)})=u_i^{(n+1)}$.
We also let $\mc K$ be the field of fractions of $\mc V$
(it is still a differential algebra).
We also denote by $\widetilde{\mc K}$
the \emph{linear closure} of $\mc K$,
which is the smallest differential field extension of $\mc K$
containing solutions to any linear differential equation
with coefficients in $\widetilde{\mc K}$,
and whose subfield of constants is $\overline{\mb F}$,
the algebraic closure of $\mb F$, see e.g. \cite{CDSK13b}.

For $P\in\mc V^\ell$ we have the associated \emph{evolutionary vector field}
$$
X_P=\sum_{i\in I,n\in\mb Z_+}(\partial^nP_i)\frac{\partial}{\partial u_i^{(n)}}\,\in\Der(\mc V)\,.
$$
This makes $\mc V^\ell$ into a Lie algebra, with Lie bracket
$[X_P,X_Q]=X_{[P,Q]}$, given by
$$
[P,Q]=X_P(Q)-X_Q(P)
=D_Q(\partial)P-D_P(\partial)Q
\,,
$$
where $D_P(\partial)$ and $D_Q(\partial)$ denote the Frechet derivatives of $P,Q\in\mc V^\ell$.

In general,
for $\theta=\big(\theta_\alpha\big)_{\alpha=1}^m\in\mc V^m$,
the \emph{Frechet derivative} 
$D_{\theta}(\partial)\in\Mat_{m\times\ell}\mc V[\partial]$ 
is defined by
\begin{equation}\label{20120126:eq2}
D_{\theta}(\partial)_{\alpha i}=
\sum_{n\in\mb Z_+}\frac{\partial \theta_\alpha}{\partial u_i^{(n)}}\partial^n
\,\,,\,\,\,\,
\alpha=1,\dots,m\,,\,\,i=1,\dots,\ell
\,.
\end{equation}
Its adjoint $D_{\theta}^*(\partial)\in\Mat_{\ell\times m}\mc V[\partial]$ is then given by
$$
D_{\theta}^*(\partial)_{i\alpha}=
\sum_{n\in\mb Z_+}(-\partial)^n
\frac{\partial \theta_\alpha}{\partial u_i^{(n)}}
\,\,,\,\,\,\,
\alpha=1,\dots,m\,,\,\,i=1,\dots,\ell
\,.
$$

For $f\in\mc V$ its \emph{variational derivative} is
$\frac{\delta f}{\delta u}=\left(\frac{\delta f}{\delta u_i}\right)_{i\in I}\in\mc V^{\oplus\ell}$,
where
$$
\frac{\delta f}{\delta u_i}
=\sum_{n\in\mb Z_+}(-\partial)^n\frac{\partial f}{\partial u_i^{(n)}}\,.
$$
Given an element $\xi\in\mc V^{\oplus\ell}$, the equation $\xi=\frac{\delta h}{\delta u}$ can be solved
for $h\in\mc V$ if and only if $D_\xi(\partial)$ is a self-adjoint operator:
$D_\xi(\partial)=D_\xi^*(\partial)$ (see e.g. \cite{BDSK09}).

%%%
\subsection{Rational matrix pseudodifferential operators}\label{sec:2.2}

Consider the skewfield $\mc K((\partial^{-1}))$ of pseudodifferential operators 
with coefficients in $\mc K$,
and the subalgebra $\mc V[\partial]$ of differential operators on $\mc V$.

The algebra $\mc V(\partial)$ of \emph{rational} pseudodifferential operators
consists of pseudodifferential operators $L(\partial)\in\mc V((\partial^{-1}))$
which admit a fractional decomposition
$L(\partial)=A(\partial)B(\partial)^{-1}$,
for some $A(\partial),B(\partial)\in\mc V[\partial]$, $B(\partial)\neq0$.
The algebra of \emph{rational matrix pseudodifferential operators}
is, by definition,  $\Mat_{\ell\times\ell}\mc V(\partial)$ \cite{CDSK12,CDSK13b}.

A matrix differential operator $B(\partial)\in\Mat_{\ell\times\ell}\mc V[\partial]$
is called \emph{non-degenerate}
if it is invertible in $\Mat_{\ell\times\ell}\mc K((\partial^{-1}))$.
Any matrix $H(\partial)\in\Mat_{\ell\times\ell}\mc V(\partial)$
can be written as a ratio of two matrix differential operators:
$H(\partial)=A(\partial) B^{-1}(\partial)$,
with $A(\partial),B(\partial)\in\Mat_{\ell\times\ell}\mc V[\partial]$,
and $B(\partial)$ non-degenerate.

%%%
\subsection{Singular degree of a rational matrix pseudodifferential operator}\label{sec:2.2c}

The \emph{Dieudonn\'e determinant} of
$A\in\Mat_{\ell\times\ell}\mc K((\partial^{-1}))$ is defined as follows.
If $A$ is degenerate, then $\det(A)=0$. Otherwise, $\det(A)$ is a pair
$$
\det(A)=(\textstyle{\det}_1(A),\deg(A))\in\mc K\times\mb Z\,,
$$
where $\det_1(A)$ and $\deg(A)$ are defined by the following conditions:
\begin{enumerate}[(i)]
\item
$\det_1(AB)=\det_1(A)\det_1(B)$ for all non-degenerate $A,B\in\Mat_{\ell\times\ell}\mc K((\partial^{-1}))$;
\item
$\deg(AB)=\deg(A)+\deg(B)$ for all non-degenerate $A,B\in\Mat_{\ell\times\ell}\mc K((\partial^{-1}))$;
\item
if $A$ is upper triangular, with diagonal entries $A_i=a_i\partial^{d_i}+$lower terms, $i=1,\dots,\ell$,
with $a_i\neq0$, then
$$
\textstyle{\det}_1(H)=\prod_{i=1}^\ell a_i
\,\,,\,\,\,\,
\deg(A)=\sum_{i=1}^\ell d_i\,.
$$
\end{enumerate}
For a non-degenerate $A\in\Mat_{\ell\times\ell}\mc K((\partial^{-1}))$, 
the integer $\deg(A)$ is called the \emph{degree} of $A$.
(It is a non-negative integer if $A$ is a matrix differential operator.)

Let $H\in\Mat_{\ell\times\ell}\mc V(\partial)$ be a rational matrix pseudodifferential operator.
The \emph{singular degree} of $H$, denoted $\sdeg(H)$ \cite{CDSK13c}, is, by definition, 
the minimal possible value of $\deg(B)$
among all fractional decomposition $H=AB^{-1}$,
with $A,B\in\Mat_{\ell\times\ell}\mc V[\partial]$,
and $B(\partial)$ non-degenerate.

Suppose that we have a rational expression for $H\in\Mat_{\ell\times\ell}\mc V(\partial)$
of the form
\begin{equation}\label{20130611:eq1}
H=\sum_{\alpha\in\mc A}
A^{\alpha}_1(B^\alpha_1)^{-1}\dots
A^{\alpha}_n(B^\alpha_n)^{-1}
\,,
\end{equation}
with $A^\alpha_i,B^\alpha_i\in\Mat_{\ell\times\ell}\mc K[\partial]$
and $B^\alpha_i$ non-degenerate, for all $i\in\mc I=\{1,\dots,n\},\alpha\in\mc A$
(a finite index set).
It is not hard to show that $\sdeg(H)\leq\sum_{\alpha\in\mc A}\sum_{i=1}^n\deg(B^\alpha_i)$,
\cite{CDSK13c}.
We say that the rational expression \eqref{20130611:eq1} is \emph{minimal} if
equality holds.
\begin{theorem}[{\cite[Cor.4.11]{CDSK13c}}]\label{20140116:thm1}
The rational expression \eqref{20130611:eq1}
is minimal if and only if
both the following systems of differential equations
in the variables 
$\{F^\alpha_i\}_{\alpha\in\mc A,i\in\{1,\dots,n\}}$
\begin{equation}\label{20140116:eq2}
\left\{\begin{array}{l}
B^\alpha_n F^\alpha_n=0
\,,\,\,\alpha\in\mc A \\
A^\alpha_iF^\alpha_i=B^\alpha_{i-1}F^\alpha_{i-1}
\,,\,\, 2\leq i\leq n,\,\alpha\in\mc A \\
\sum_{\alpha\in\mc A}A^\alpha_1F^\alpha_1=0
\end{array}\right.
\end{equation}
and
\begin{equation}\label{20140116:eq3}
\left\{\begin{array}{l}
{B^\alpha_1}^* F^\alpha_1=0
\,,\,\,\alpha\in\mc A \\
{A^\alpha_i}^*F^\alpha_{i-1}={B^\alpha_i}^*F^\alpha_i
\,,\,\, 2\leq i\leq n,\,\alpha\in\mc A \\
\sum_{\alpha\in\mc A}F^\alpha_n=0
\end{array}\right.
\end{equation}
have only the zero solution over the linear closure $\widetilde{\mc K}$ of $\mc K$.
\end{theorem}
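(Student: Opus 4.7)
The plan is to translate degree counts of matrix differential operators into dimensions of solution spaces over the linear closure $\widetilde{\mc K}$, and then to encode every possible degree drop in \eqref{20130611:eq1} as a kernel vector of one of two block operators built from the $A_i^\alpha$ and $B_i^\alpha$. The starting point is the basic fact that for any non-degenerate $B\in\Mat_{\ell\times\ell}\mc V[\partial]$ one has $\deg(B)=\dim_{\widetilde{\mc K}}\ker(B)=\dim_{\widetilde{\mc K}}\ker(B^*)$, together with the multiplicativity of $\deg$ on products.

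First, I would reduce the rational expression to a single fractional decomposition $H=AB^{-1}$ by iteratively applying the Ore-type construction: collapse each product $A_1^\alpha(B_1^\alpha)^{-1}\cdots A_n^\alpha(B_n^\alpha)^{-1}$ into a single fraction by finding at each step operators $\tilde A,\tilde B\in\Mat_{\ell\times\ell}\mc V[\partial]$ with $A_{i+1}^\alpha\tilde B=B_i^\alpha\tilde A$, and then combine the summands over $\alpha$ via a common right denominator. The resulting denominator has degree equal to $\sum_{\alpha\in\mc A}\sum_{i=1}^n\deg(B_i^\alpha)$ minus a non-negative correction, so that \eqref{20130611:eq1} is minimal precisely when that correction vanishes.

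The central step is to identify the correction with the dimensions of the solution spaces of \eqref{20140116:eq2} and \eqref{20140116:eq3}. A cancellation of a common \emph{right} factor in the middle of one product, at the boundary between $(B_{i-1}^\alpha)^{-1}$ and $A_i^\alpha$, corresponds to a nonzero $\widetilde{\mc K}^\ell$-vector $F_i^\alpha$ satisfying $A_i^\alpha F_i^\alpha=B_{i-1}^\alpha F_{i-1}^\alpha$; threading these compatibility conditions along the whole chain, and imposing the outer boundary conditions $B_n^\alpha F_n^\alpha=0$ (where each product terminates on the right) and $\sum_\alpha A_1^\alpha F_1^\alpha=0$ (where the summands are glued on the left), reproduces exactly \eqref{20140116:eq2}. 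The dual phenomenon---cancellation of a common \emph{left} factor---is captured by the adjoint system \eqref{20140116:eq3}, via the identity $\deg(B)=\dim_{\widetilde{\mc K}}\ker(B^*)$ and the reversal that passing to adjoints induces on the order of the factors.

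The main technical obstacle is to make this accounting simultaneous and lossless: one must show that the total correction equals \emph{exactly} the sum of the dimensions of the two solution spaces, with no double counting and no missed contributions. I would do this by assembling all the $A_i^\alpha$ and $B_i^\alpha$ into two large block matrix differential operators whose kernels over $\widetilde{\mc K}$ are, respectively, the solution spaces of \eqref{20140116:eq2} and \eqref{20140116:eq3}, and then applying the ``degree equals kernel dimension'' principle at the block level to the single fractional decomposition produced in the first step. Once this precise degree formula is established, minimality of \eqref{20130611:eq1} is equivalent to both block kernels being zero, which is the content of the theorem.
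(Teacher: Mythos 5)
The paper does not actually prove Theorem \ref{20140116:thm1}: it is imported from \cite[Cor.~4.11]{CDSK13c} without proof, so there is no internal argument to compare yours against; your outline must therefore stand on its own, and its central step does not. You propose to prove an exact identity: that the deficit $\sum_{\alpha,i}\deg(B^\alpha_i)-\sdeg(H)$ equals the sum of the dimensions of the solution spaces of \eqref{20140116:eq2} and \eqref{20140116:eq3} over $\widetilde{\mc K}$, and to deduce the theorem by setting the deficit to zero. This identity is false. Take $\ell=1$, a single summand, $n=2$, and $A_1=B_1=A_2=B_2=\partial$. Then $H=\partial\partial^{-1}\partial\partial^{-1}=1$, so $\sdeg(H)=0$ while $\deg(B_1)+\deg(B_2)=2$: the deficit is $2$. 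But system \eqref{20140116:eq2} reads $\partial F_2=0$, $\partial F_2=\partial F_1$, $\partial F_1=0$, whose solution space is $\{(c_1,c_2):c_1,c_2\in\overline{\mb F}\}$ of dimension $2$, and system \eqref{20140116:eq3} reads $\partial F_1=0$, $\partial F_1=\partial F_2$, $F_2=0$, whose solution space $\{(c,0)\}$ has dimension $1$; the total is $3\neq 2$. (The theorem itself is consistent here: the expression is non-minimal and both systems have nonzero solutions; what fails is your ``lossless accounting''.) The reason is that the same degree drop can be visible to both the direct and the adjoint system simultaneously, so the two solution spaces over-count; no block-operator version of ``degree equals kernel dimension'' can yield the exact additive formula you rely on.

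Consequently the argument needs to be reorganized into two one-sided implications rather than a dimension identity: (a) if either system has a nonzero solution over $\widetilde{\mc K}$, one must exhibit a fractional decomposition of $H$ whose denominator has degree strictly less than $\sum\deg(B^\alpha_i)$ (here the Ore-type collapse and the fact, from \cite{CDSK13b}, that a nonzero element of $\Ker B$ or $\Ker B^*$ permits splitting off a degree-one right, respectively left, factor are the relevant tools); and (b) conversely, if the expression is not minimal, i.e.\ some decomposition $H=AB^{-1}$ has $\deg(B)<\sum\deg(B^\alpha_i)$, one must produce a nonzero solution of \eqref{20140116:eq2} or \eqref{20140116:eq3}; neither direction follows from the counting you describe. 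Two smaller points: the kernel of a matrix differential operator acting on $\widetilde{\mc K}^\ell$ is a vector space over the constants $\overline{\mb F}$, not over $\widetilde{\mc K}$, so the correct statement of your basic fact is $\deg(B)=\dim_{\overline{\mb F}}\Ker B=\dim_{\overline{\mb F}}\Ker B^*$; and when you combine the summands over $\alpha$ by a common right denominator you should justify that this can be done within $\Mat_{\ell\times\ell}\mc V[\partial]$ with controlled degree, which is itself a nontrivial step of the theory you are invoking.
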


%%%
\subsection{Association relation}\label{sec:2.2b}

Given $H(\partial)\in\Mat_{\ell\times\ell}\mc V(\partial)$,
we say that $\xi\in\mc V^{\oplus l}$ and $P\in\mc V^\ell$
are $H$-\emph{associated},
and denote it by
\begin{equation}\label{20130112:eq2}
\xi \ass{H}P\,,
\end{equation}
if there exist
a fractional decomposition $H=AB^{-1}$ 
with $A,B\in\Mat_{\ell\times\ell}\mc V[\partial]$ 
and $B$ non-degenerate,
and an element $F\in\mc K^\ell$,
such that $\xi=BF,\,P=AF$ \cite{DSK13}.

\begin{theorem}[{\cite[Thm4.12]{CDSK13c}}]\label{20140116:thm2}
Let \eqref{20130611:eq1} be a minimal rational expression for
$H(\partial)\in\Mat_{\ell\times\ell}\mc V(\partial)$.
Then, $\xi\ass{H}P$ if and only 
the system of differential equations
\begin{equation}\label{20140116:eq4}
\left\{\begin{array}{l}
B^\alpha_n F^\alpha_n=\xi
\,,\,\,\alpha\in\mc A \\
A^\alpha_iF^\alpha_i=B^\alpha_{i-1}F^\alpha_{i-1}
\,,\,\, 2\leq i\leq n,\,\alpha\in\mc A \\
\sum_{\alpha\in\mc A}A^\alpha_1F^\alpha_1=P
\end{array}\right.
\end{equation}
has a solution $\{F^\alpha_i\}_{\alpha\in\mc A,i\in\{1,\dots,n\}}$ over $\mc K$.
\end{theorem}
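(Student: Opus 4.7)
My plan is to prove both directions by combining the given rational expression into a single fractional decomposition and then using the minimality hypothesis via Theorem~\ref{20140116:thm1} to control the solution set of the resulting linear system.

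For the $(\Leftarrow)$ direction, suppose \eqref{20140116:eq4} has a solution $\{F^\alpha_i\}$ over $\mc K$. I would use the (left) Ore property of $\mc V[\partial]$ inside the skewfield $\mc K((\partial^{-1}))$ to collapse \eqref{20130611:eq1} into a single fractional form $H=AB^{-1}$: for each $\alpha$, find a common right denominator $B\in\Mat_{\ell\times\ell}\mc V[\partial]$, non-degenerate, and $\tilde A^\alpha\in\Mat_{\ell\times\ell}\mc V[\partial]$ with $A^\alpha_1(B^\alpha_1)^{-1}\cdots A^\alpha_n(B^\alpha_n)^{-1}=\tilde A^\alpha B^{-1}$, and set $A=\sum_\alpha\tilde A^\alpha$. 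Chasing the relations in \eqref{20140116:eq4} from $i=n$ down to $i=1$, the $F^\alpha_i$'s assemble into a single $F\in\mc K^\ell$ with $BF=\xi$ and $AF=P$, i.e.\ $\xi\ass{H}P$. Minimality is not used in this direction.

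For the $(\Rightarrow)$ direction, fix any fractional decomposition $H=\tilde A\tilde B^{-1}$ and $\tilde F\in\mc K^\ell$ with $\tilde B\tilde F=\xi$, $\tilde A\tilde F=P$ witnessing the association. Over the linear closure $\widetilde{\mc K}$, I would build a candidate solution of \eqref{20140116:eq4} recursively: pick $F^\alpha_n\in\widetilde{\mc K}^\ell$ with $B^\alpha_n F^\alpha_n=\xi$ (non-degenerate matrix differential operators act surjectively on $\widetilde{\mc K}^\ell$ by definition of the linear closure), then inductively choose $F^\alpha_{i-1}$ with $B^\alpha_{i-1}F^\alpha_{i-1}=A^\alpha_i F^\alpha_i$, leaving a $\ker B^\alpha_{i-1}$-ambiguity at each step. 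The crux is then to match the final condition $\sum_\alpha A^\alpha_1 F^\alpha_1=P$ and to descend the solution from $\widetilde{\mc K}$ back to $\mc K$.

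The main obstacle is this last paragraph. I expect the following mechanism: the obstruction to matching $P$ is precisely a nontrivial $\widetilde{\mc K}$-solution of the homogeneous system \eqref{20140116:eq2}, which is ruled out by minimality via Theorem~\ref{20140116:thm1}; dually, the kernel freedom in the recursive construction is governed by the adjoint system \eqref{20140116:eq3}, also excluded by minimality. Together these force \eqref{20140116:eq4} to have a \emph{unique} solution over $\widetilde{\mc K}$, necessarily realizable (its existence coming from the combined decomposition $H=AB^{-1}$ of the first direction applied to $\tilde F$). Finally, since the right-hand data $(\xi,P)$ lies in $\mc K$ and the field of constants of $\widetilde{\mc K}$ is $\overline{\mb F}$, a differential-Galois descent argument (the unique solution is invariant under all $\mc K$-linear differential automorphisms of $\widetilde{\mc K}$) shows the solution lies in $\mc K^\ell$, completing the proof.
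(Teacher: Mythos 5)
The paper itself does not prove this statement (it quotes it from \cite[Thm.4.12]{CDSK13c}), so your argument can only be judged on its own terms, and it has genuine gaps in both directions. In the $(\Leftarrow)$ direction, your remark that ``minimality is not used'' is already fatal: the implication is false for non-minimal expressions, so any correct proof must invoke minimality, and yours never does. Concretely, take $\ell=1$, $\mc A=\{1,2\}$, $n=1$, $A^1_1=1$, $A^2_1=-1$, $B^1_1=B^2_1=\partial$, so that $H=\partial^{-1}-\partial^{-1}=0$ (a non-minimal expression for the zero operator). With $\xi=u'$, $F^1_1=u+1$, $F^2_1=u$, the system \eqref{20140116:eq4} is solved over $\mc K$ with $P=1$; but $\xi\ass{H}P$ forces $P=0$, since any fractional decomposition $0=AB^{-1}$ has $A=0$. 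The error sits exactly in the step you pass over quickly: after collapsing to a common right denominator $H=AB^{-1}$ via the Ore property, the chain $\{F^\alpha_i\}$ does \emph{not} automatically ``assemble'' into a single $F\in\mc K^\ell$ with $BF=\xi$, $AF=P$; one must solve auxiliary equations relating $F$ to the $F^\alpha_i$, and the discrepancies live in the kernels of the $B^\alpha_i$ over $\widetilde{\mc K}$. Killing those discrepancies is precisely where the minimality hypothesis, through the kernel conditions \eqref{20140116:eq2}--\eqref{20140116:eq3} of Theorem \ref{20140116:thm1}, has to enter.

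In the $(\Rightarrow)$ direction the mechanism you describe conflates uniqueness with existence. Triviality over $\widetilde{\mc K}$ of the solutions of \eqref{20140116:eq2} and \eqref{20140116:eq3} gives at most \emph{uniqueness} of a solution of \eqref{20140116:eq4} over $\widetilde{\mc K}$; it is not ``the obstruction to matching $\sum_\alpha A^\alpha_1F^\alpha_1=P$'', because the freedom you have in the recursive construction need not be large enough to reach $P$ at the last step. The actual existence of a solution over $\widetilde{\mc K}$ --- i.e.\ transferring the witnessing data $\tilde A\tilde B^{-1}$, $\tilde F$ to the given rational expression --- is the substantive point, and you defer it to ``the combined decomposition of the first direction applied to $\tilde F$'', which is circular since that direction is itself unproved; this transfer is where the singular-degree theory of \cite{CDSK13c} (comparison with a minimal fractional decomposition, which divides any other decomposition) actually does the work. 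Finally, the descent from $\widetilde{\mc K}$ to $\mc K$ is asserted via a differential-Galois fixed-field property of the linear closure that you do not justify (note also that the constants enlarge from $\mb F$ to $\overline{\mb F}$, so invariance under $\mc K$-linear differential automorphisms alone does not obviously land you in $\mc K^\ell$); this step needs either a proved descent lemma for $\widetilde{\mc K}/\mc K$ or a different, explicitly linear-algebraic argument.
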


%%%
\subsection{Non-local Poisson structures}\label{sec:2.3}

To a matrix pseudodifferential operator
$H=\big(H_{ij}(\partial)\big)_{i,j\in I}\in\Mat_{\ell\times\ell}\mc V((\partial^{-1}))$
we associate a map, called $\lambda$-bracket,
$\{\cdot\,_\lambda\,\cdot\}_H:\,\mc V\times\mc V\to\mc V((\lambda^{-1}))$,
given by the following \emph{Master Formula} (see \cite{DSK13}):
\begin{equation}\label{20110922:eq1}
\{f_\lambda g\}_H
=
\sum_{\substack{i,j\in I \\ m,n\in\mb Z_+}} 
\frac{\partial g}{\partial u_j^{(n)}}
(\lambda+\partial)^n
H_{ji}(\lambda+\partial)
(-\lambda-\partial)^m
\frac{\partial f}{\partial u_i^{(m)}}
\,\in\mc V((\lambda^{-1}))
\,.
\end{equation}
In particular,
\begin{equation}\label{20130613:eq2}
H_{ji}(\partial)
=
{\{{u_i}_\partial{u_j}\}_H}_\to
\,.
\end{equation}
(The arrow means that we move $\partial$ to the right.)

The following facts are proved in \cite{BDSK09} and \cite{DSK13}.
For arbitrary $H$, the $\lambda$-bracket \eqref{20110922:eq1}
satisfies the following sesquilinearity conditions:
\begin{enumerate}[(i)]
\setcounter{enumi}{0}
\item
$\{\partial f_\lambda g\}=-\lambda \{f_\lambda g\}$,
\item
$\{f_\lambda \partial g\}=(\lambda+\partial)\{f_\lambda g\}$,
\end{enumerate}
and left and right Leibniz rules ($f,g,h\in\mc V$):
\begin{enumerate}[(i)]
\setcounter{enumi}{2}
\item
$\{f_\lambda gh\}=\{f_\lambda g\}h+\{f_\lambda h\}g$,
\item
$\{fg_\lambda h\}=\{f_{\lambda+\partial} h\}g+\{g_{\lambda+\partial} h\}f$.
\end{enumerate}
Here and further an expression $\{f_{\lambda+\partial}h\}_\to g$ is interpreted as follows:
if $\{f_{\lambda}h\}=\sum_{n=-\infty}^Nc_n\lambda^n$, 
then $\{f_{\lambda+\partial}h\}_\to g=\sum_{n=-\infty}^Nc_n(\lambda+\partial)^ng$,
where we expand $(\lambda+\partial)^n$ in non-negative powers of $\partial$.

Skewadjointness of $H$ is equivalent to the following skewsymmetry condition
\begin{enumerate}[(i)]
\setcounter{enumi}{4}
\item
$\{f_\lambda g\}=-\{g_{-\lambda-\partial} f\}$.
\end{enumerate}
The RHS of the skewsymmetry condition should be interpreted as follows:
we move $-\lambda-\partial$ to the left and
we expand its powers in non-negative powers of $\partial$,
acting on the coefficients on the $\lambda$-bracket.

Let
$\mc V_{\lambda,\mu}:=\mc V[[\lambda^{-1},\mu^{-1},(\lambda+\mu)^{-1}]][\lambda,\mu]$,
i.e. the quotient of the $\mb F[\lambda,\mu,\nu]$-module
$\mc V[[\lambda^{-1},\mu^{-1},\nu^{-1}]][\lambda,\mu,\nu]$
by the submodule
$(\nu-\lambda-\mu)\mc V[[\lambda^{-1},\mu^{-1},\nu^{-1}]][\lambda,\mu,\nu]$.
We have the natural embedding 
$\iota_{\mu,\lambda}:\,\mc V_{\lambda,\mu}\hookrightarrow V((\lambda^{-1}))((\mu^{-1}))$
defined by expanding the negative powers of $\nu=\lambda+\mu$
by geometric series in the domain $|\mu|>|\lambda|$.
In general, if $H$ is an arbitrary matrix pseudodifferentil operator,
we have
$\{f_\lambda\{g_\mu h\}\}\in\mc V((\lambda^{-1}))((\mu^{-1}))$ for all $f,g,h\in\mc V$.
If $H$ is a rational matrix pseudodifferential operator, we have
the following admissibility condition ($f,g,h\in\mc V$):
\begin{enumerate}[(i)]
\setcounter{enumi}{5}
\item
$\{f_\lambda\{g_\mu h\}\}\in\mc V_{\lambda,\mu}$,
\end{enumerate}
where we identify the space $\mc V_{\lambda,\mu}$
with its image in $\mc V((\lambda^{-1}))((\mu^{-1}))$ via the embedding $\iota_{\mu,\lambda}$.

\begin{definition}\label{20140117:def}
A \emph{non-local Poisson structure} on $\mc V$
is  a skewadjoint rational matrix pseudodifferential operator $H$ with coefficients in $\mc V$,
satisfying the following Jacobi identity ($f,g,h\in\mc V$):
\begin{enumerate}[(i)]
\setcounter{enumi}{6}
\item
$\{f_\lambda\{g_\mu h\}\}-\{g_\mu\{f_\lambda h\}\}=\{\{f_\lambda g\}_{\lambda+\mu} h\}$,
\end{enumerate}
where the equality is understood in the space $\mc V_{\lambda,\mu}$.
\end{definition}
(Note that, if skewsymmetry (v) and admissibility (vi) hold,
then all three terms of Jacobi identity lie in the image of $\mc V_{\lambda,\mu}$
via the appropriate embedding $\iota_{\mu,\lambda}$, $\iota_{\lambda,\mu}$ 
or $\iota_{\lambda+\mu,\lambda}$.)
Note that Jacobi identity (vii) holds for all $f,g,h\in\mc V$
if and only if it holds for any triple of generators $u_i,u_j,u_k$.

Two non-local Poisson structures $H_0,H_1\in\Mat_{\ell\times\ell}\mc V(\partial)$ on $\mc V$
are said to be \emph{compatible} if any their linear combination (or, equivalently, their sum)
is a non-local Poisson structure.
In this case we say that $(H_0,H_1)$ form a \emph{bi-Poisson structure} on $\mc V$.
\begin{definition}\label{pva}
A \emph{non-local Poisson vertex algebra}
is, by definition, a differential algebra $\mc V$ endowed with a $\lambda$-bracket
$\{\cdot\,_\lambda\,\cdot\}:\,\mc V\times\mc V\to\mc V((\lambda^{-1}))$
satisfying conditions (i)--(vii).
\end{definition}
We shall often drop the term ``non-local'',
so when we will refer to Poisson structures and $\lambda$-brackets
we will always mean \emph{non-local PVA}'s
and \emph{non-local} $\lambda$-\emph{brackets}.
(This, of course, includes the local case as well.)

%%%
\subsection{Hamiltonian equations and integrability}\label{sec:2.4}

Recall that we have a non-degenerate pairing 
$(\cdot\,|\,\cdot):\,\mc V^\ell\times\mc V^{\ell}\to\mc V/\partial\mc V$
given by $(P|\xi)=\tint P\cdot\xi$ (see e.g. \cite{BDSK09}).
Let $H\in\Mat_{\ell\times\ell}\mc V(\partial)$ be a non-local Poisson structure.
An evolution equation on the variables $u=\big(u_i\big)_{i\in I}$,
\begin{equation}\label{20120124:eq5}
\frac{du}{dt}
=P\,,
\end{equation}
is called \emph{Hamiltonian} with respect to the Poisson structure $H$
and the Hamiltonian functional $\tint h\in\mc V/\partial\mc V$
if (cf. Section \ref{sec:2.2b})
$$
\frac{\delta h}{\delta u}\ass{H}P
\,.
$$

Equation \eqref{20120124:eq5} is called \emph{bi-Hamiltonian}
if there are two compatible non-local Poisson structures $H_0$ and $H_1$,
and two local functionals $\tint h_0,\tint h_1\in\mc V/\partial\mc V$,
such that
\begin{equation}\label{20140117:eq1}
\frac{\delta h_0}{\delta u}\ass{H_1}P
\,\,\text{ and }\,\,
\frac{\delta h_1}{\delta u}\ass{H_0}P
\,.
\end{equation}

%By the chain rule, any element $f\in\mc V$ evolves according to the equation
%$$
%\frac{df}{dt}=\sum_{i\in I}\sum_{n\in\mb Z_+}(\partial^nP_i)\frac{\partial f}{\partial u_i^{(n)}}
%=D_f(\partial)P\,,
%$$
%and, integrating by parts,
%a local functional $\tint f\in\mc V/\partial\mc V$
%evolves according to
%$$
%\frac{d\tint f}{dt}=\int P\cdot\frac{\delta f}{\delta u}
%\quad\bigg(=\big(P\big|\frac{\delta f}{\delta u}\big)\bigg)\,.
%$$
%
An \emph{integral of motion} for the Hamiltonian equation \eqref{20120124:eq5}
is a local functional $\tint f\in\mc V/\partial\mc V$
which is constant in time, i.e. such that $(P|\frac{\delta f}{\delta u})=0$.
The usual requirement for \emph{integrability}
is to have
sequences $\{\tint h_n\}_{n\in\mb Z_+}\subset\mc V/\partial\mc V$ 
and $\{P_n\}_{n\in\mb Z_+}\subset\mc V^\ell$,
starting with $\tint h_0=\tint h$ and $P_0=P$,
such that
\begin{enumerate}[(C1)]
\item
$\frac{\delta h_n}{\delta u}\ass{H}P_n$ for every $n\in\mb Z_+$,
\item
$[P_m,P_n]=0$ for all $m,n\in\mb Z_+$,
\item
$(P_m\,|\,\frac{\delta h_n}{\delta u})=0$ for all $m,n\in\mb Z_+$.
\item
The elements $P_n$ span an infinite dimensional subspace of $\mc V^\ell$.
\end{enumerate}
In this case, we have an \emph{integrable hierarchy} of Hamiltonian equations
$$
\frac{du}{dt_n} = P_n\,,\,\,n\in\mb Z_+\,.
$$
Elements $\tint h_n$'s are called \emph{higher Hamiltonians},
the $P_n$'s are called \emph{higher symmetries},
and the condition $(P_m\,|\,\frac{\delta h_n}{\delta u})=0$
says that $\tint h_m$ and $\tint h_n$ are \emph{in involution}.
Note that (C4) implies that elements $\frac{\delta h_n}{\delta u}$
span an infinite dimensional subspace of $\mc V^\ell$.
The converse holds provided that either $H_0$ or $H_1$ is non-degenerate.

Suppose we have a bi-Hamiltonian equation \eqref{20120124:eq5},
associated to the compatible Poisson structures $H_0,H_1$
and the Hamiltonian functionals $\tint h_0,\tint h_1$,
in the sense of equation \eqref{20140117:eq1}.
The \emph{Lenard-Magri scheme of integrability}
consists in finding
sequences $\{\tint h_n\}_{n\in\mb Z_+}\subset\mc V/\partial\mc V$ 
and $\{P_n\}_{n\in\mb Z_+}\subset\mc V^\ell$,
starting with $P_0=P$ and the given Hamiltonian functionals $\tint h_0,\tint h_1$,
satisfying the following recursive relations:
\begin{equation}\label{20130604:eq7}
\frac{\delta h_{n-1}}{\delta u}\ass{H_1}P_n
\,\,,\,\,\,\,
\frac{\delta h_n}{\delta u}\ass{H_0}P_n
\,\,\,\,
\text{ for all } n\in\mb Z_+
\,.
\end{equation}
In this case,
we have the corresponding bi-Hamiltonian hierarchy
\begin{equation}\label{20130604:eq6}
\frac{du}{dt_n}=P_n\,\in\mc V^\ell
\,\,,\,\,\,\,
n\in\mb Z_+
\,,
\end{equation}
all Hamiltonian functionals $\tint h_n,\,n\geq-1$,
are integrals of motion for all equations of the hierarchy,
and they are in involution with respect to both Poisson structures $H_0$ and $H_1$,
and all commutators $[P_m,P_n]$ are zero, provided that one of the Poisson 
structures $H_0$ or $H_1$ is local (see \cite[Sec.7.4]{DSK13}).
Hence, in this situation \eqref{20130604:eq6} is an integrable hierarchy
of compatible evolution equations,
provided that condition (C4) holds.

%%%%%%%%%%%%%%%%%%%%%%%%%%%%%%%%%%%%%
\section{Dirac reduction for (non-local) Poisson structures and Hamiltonian equations}\label{sec:3}

%%%
\subsection{Dirac reduction of a Poisson structure}\label{sec:3.1}

Let $H(\partial)\in\Mat_{\ell\times\ell}\mc V(\partial)$
be a Poisson structure on $\mc V$.
Let $\{\cdot\,_\lambda\,\cdot\}_H$ be the corresponding PVA $\lambda$-bracket on $\mc V$
given by the Master Formula \eqref{20110922:eq1}.
Let $\theta_1,\dots,\theta_m$ be some elements of $\mc V$,
and let $\mc I=\langle\theta_1,\dots,\theta_m\rangle_{\mc V}\subset\mc V$ 
be the differential ideal generated by them.
Consider the following rational matrix pseudodifferential operator
\begin{equation}\label{20130529:eq1}
C(\partial)=D_\theta(\partial)\circ H(\partial)\circ D_\theta^*(\partial)\,
\in\Mat_{m\times m}\mc V(\partial)\,,
\end{equation}
where $D_\theta(\partial)$ is the $m\times\ell$ matrix differential operator
of Frechet derivatives of the elements $\theta_i$'s:
\begin{equation}\label{20130529:eq2}
{D_\theta(\partial)}_{\alpha,i}
=\sum_{n\in\mb Z_+}\frac{\partial\theta_\alpha}{\partial u_i^{(n)}}\partial^n
\,\,,\,\,\,\,
\alpha=1,\dots,m,\,i=1,\dots,\ell\,,
\end{equation}
and $D_\theta^*(\partial)\in\Mat_{\ell\times m}\mc V[\partial]$ is its adjoint.
Recalling the Master Formula \eqref{20110922:eq1},
we get that $C(\partial)$ has matrix elements with symbol
\begin{equation}\label{C}
C_{\alpha\beta}(\lambda)=\{\theta_{\beta}{}_{\lambda}\theta_{\alpha}\}_H\,.
\end{equation}
Note also that, by the skewadjointness of $H$,
the corresponding $\lambda$-bracket $\{\cdot\,_\lambda\,\cdot\}_H$
is skewsymmetric, hence $C(\partial)$ is a skewadjoint pseudodifferential operator.

We shall assume that the matrix $C(\partial)$ in \eqref{20130529:eq1}
is invertible in $\Mat_{m\times m}\mc V((\partial^{-1}))$,
and we denote its inverse by
$C^{-1}(\partial)=\big((C^{-1})_{\alpha\beta}(\partial)\big)_{\alpha,\beta=1}^m
\in\Mat_{m\times m}\mc V((\partial^{-1}))$.
\begin{definition}\label{20130529:def}
The \emph{Dirac modification} of the Poisson structure $H\in\Mat_{\ell\times\ell}\mc V(\partial)$
by the \emph{constraints} $\theta_1,\dots,\theta_m$
is the following skewadjoint $\ell\times\ell$ matrix pseudodifferential operator:
\begin{equation}\label{20130529:eq3}
H^D(\partial)
=
H(\partial)+B(\partial)\circ C^{-1}(\partial)\circ B^*(\partial)\,,
\end{equation}
where $B(\partial)=H(\partial)\circ D_\theta^*(\partial)
\in\Mat_{\ell\times m}\mc V(\partial)$.
\end{definition}
The matrix pseudodifferential operator $H^D(\partial)$ is skewadjoint and rational.
The corresponding $\lambda$-bracket, given by the Master Formula \eqref{20110922:eq1}, 
is (cf. \cite{DSKV13c})
\begin{equation}\label{dirac}
\{f_{\lambda}g\}_H^D
=\{f_{\lambda}g\}_H
-\sum_{\alpha,\beta=1}^m
{\{{\theta_{\beta}}_{\lambda+\partial}g\}_H}_{\to}
(C^{-1})_{\beta\alpha}(\lambda+\partial)
{\{f_{\lambda}\theta_{\alpha}\}_H}\,.
\end{equation}
The following result is a special case of \cite[Thm.2.2]{DSKV13c}:
\begin{theorem}[{}]\label{prop:dirac}
\begin{enumerate}[(a)]
\item
The Dirac modified $\lambda$-bracket \eqref{dirac} satisfies the Jacobi identity (vii).
Consequently, the Dirac modification $H^D(\partial)$ is a non-local Poisson structure on $\mc V$.
\item
All the elements $\theta_i,\,i=1,\dots,m$, are central 
with respect to the Dirac modified $\lambda$-bracket, i.e.:
$$
\{f_\lambda\theta_i\}^D_H=\{{\theta_i}_\lambda f\}^D_H=0
$$
for all $i=1,\dots,m$ and $f\in\mc V$.
\item
The differential ideal $\mc I=\langle\theta_1,\dots,\theta_m\rangle_{\mc V}\subset\mc V$,
generated by $\theta_1,\dots,\theta_m$,
is an ideal with respect to the Dirac modified $\lambda$-bracket $\{\cdot\,_\lambda\,\cdot\}^D$,
namely:
$$
\{\mc I\,_\lambda\,\mc V\}^D_H
\,,\,\,
\{\mc V\,_\lambda\,\mc I\}^D_H\,
\subset\mc I((\lambda^{-1}))
\,.
$$
Hence, the quotient space $\mc V/\mc I$ is a PVA,
with $\lambda$-bracket induced by $\{\cdot\,_\lambda\,\cdot\}^D$,
which we call the \emph{Dirac reduction} of $\mc V$ 
by the constraints $\theta_1,\dots,\theta_m$.
\end{enumerate}
\end{theorem}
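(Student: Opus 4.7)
The only nontrivial axiom to verify for $H^D$ is the Jacobi identity (vii). Indeed, $H^D$ is manifestly skewadjoint, since so are $H$ and $C$ (and hence $C^{-1}$), yielding skewsymmetry (v); sesquilinearity (i)--(ii) and the Leibniz rules (iii)--(iv) hold automatically for the $\lambda$-bracket attached via the Master Formula \eqref{20110922:eq1} to any matrix pseudodifferential operator; and admissibility (vi) follows from rationality of $H^D$.

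\textbf{Parts (b) and (c).} For (b), I substitute $g=\theta_i$ into \eqref{dirac}. By \eqref{C}, $\{{\theta_\beta}_{\lambda+\partial}\theta_i\}_H = C_{i\beta}(\lambda+\partial)$, so the sum $\sum_\beta C_{i\beta}(\lambda+\partial)(C^{-1})_{\beta\alpha}(\lambda+\partial)$ collapses to $\delta_{i\alpha}$, and the correction term equals $\{f_\lambda\theta_i\}_H$, cancelling the first summand. The centrality $\{{\theta_i}_\lambda f\}^D_H=0$ then follows from skewsymmetry (v). Part (c) is an immediate consequence: for any $g\partial^n\theta_i\in\mc I$, sesquilinearity and the left Leibniz rule give
$$
\{f_\lambda\,g\partial^n\theta_i\}^D_H
=\{f_\lambda g\}^D_H\,\partial^n\theta_i
+g(\lambda+\partial)^n\{f_\lambda\theta_i\}^D_H
=\{f_\lambda g\}^D_H\,\partial^n\theta_i\,\in\mc I((\lambda^{-1}))\,,
$$
and the companion statement $\{\mc I_\lambda\mc V\}^D_H\subset\mc I((\lambda^{-1}))$ follows by the right Leibniz rule, combined with skewsymmetry.

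\textbf{Part (a), the main obstacle.} Since Jacobi need only be checked on triples of generators $u_i,u_j,u_k$, the plan is to expand each of the three terms $\{f_\lambda\{g_\mu h\}^D_H\}^D_H$, $\{g_\mu\{f_\lambda h\}^D_H\}^D_H$ and $\{\{f_\lambda g\}^D_H{}_{\lambda+\mu}h\}^D_H$ using \eqref{dirac} and show the alternating sum vanishes. The ``bare'' contribution, containing no $C^{-1}$ factor, is precisely the Jacobi identity for $H$ and vanishes. The remaining correction terms are then organized by the total number of $C^{-1}$ factors they carry. The three principal tools for their cancellation are: (1) the identity $\sum_\gamma C_{\alpha\gamma}(\mu)(C^{-1})_{\gamma\beta}(\mu)=\delta_{\alpha\beta}$; (2) the formula
$$
\{f_\lambda (C^{-1})_{\gamma\delta}(\mu)\}_H
=-\sum_{\alpha,\beta}(C^{-1})_{\gamma\alpha}(\lambda+\mu)
\{f_\lambda C_{\alpha\beta}(\mu)\}_H(C^{-1})_{\beta\delta}(\mu)\,,
$$
obtained by applying the derivation $\{f_\lambda\cdot\}_H$ to $CC^{-1}=I$ and using the Leibniz rule; and (3) the Jacobi identity for $H$ applied to triples involving the constraints $\theta_\alpha$, which rewrites nested brackets such as $\{f_\lambda\{{\theta_\alpha}_\mu h\}_H\}_H$ as combinations of $\{{\theta_\alpha}_\mu\{f_\lambda h\}_H\}_H$ and $\{\{f_\lambda\theta_\alpha\}_H{}_{\lambda+\mu}h\}_H$.

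The main obstacle is the combinatorial bookkeeping: one must correctly match correction terms of equal $C^{-1}$-weight arising from the three outer brackets, then show that the residuals cancel by symmetry under $(\lambda,\alpha)\leftrightarrow(\mu,\beta)$ combined with the skewadjointness of $C^{-1}$. Once this matching is set up, each homogeneous group of terms collapses by a single application of Jacobi for $H$ combined with $CC^{-1}=I$, so the genuine work is in the indexing rather than in any single algebraic step.
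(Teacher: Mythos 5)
Note first that the paper does not prove this statement at all: it is quoted as a special case of \cite[Thm.2.2]{DSKV13c}, so your attempt can only be judged on its own merits and against that reference. Your treatment of parts (b) and (c) is correct and complete, and it is the standard argument: with $g=\theta_i$ the correction term in \eqref{dirac} collapses via $\sum_\beta C_{i\beta}(\lambda+\partial)(C^{-1})_{\beta\alpha}(\lambda+\partial)=\delta_{i\alpha}$ and cancels $\{f_\lambda\theta_i\}_H$, the opposite centrality follows from skewadjointness of $H^D$, and (c) then follows from sesquilinearity and the Leibniz rules, which indeed hold for the Master-Formula bracket of any matrix pseudodifferential operator.

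For part (a), however, what you give is a plan, not a proof, and part (a) is precisely the substantive content of the theorem. You correctly identify the three tools that the actual proof in \cite{DSKV13c} uses (the identity $C C^{-1}=I$, the ``derivative of the inverse'' formula for $\{f_\lambda (C^{-1})_{\gamma\delta}\}_H$, and the Jacobi identity for $H$ on triples involving the constraints), but the claim that ``each homogeneous group of terms collapses by a single application of Jacobi for $H$ combined with $CC^{-1}=I$'' is exactly the assertion that needs to be verified, and you never carry out the matching; in the reference this cancellation is a genuinely lengthy computation, not mere indexing. Moreover, your key auxiliary formula (2) is stated imprecisely for the non-local setting: the entries of $C^{-1}$ lie in $\mc V((\partial^{-1}))$, so the bracket $\{f_\lambda (C^{-1})_{\gamma\delta}(\mu)\}_H$ must be defined through expansions in $\mc V_{\lambda,\mu}$, and the correct identity carries $\partial$-shifts in the arguments (factors evaluated at $\lambda+\mu+\partial$ acting on the subsequent terms), without which the right Leibniz manipulations and the admissibility bookkeeping do not go through. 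As written, part (a) therefore has a genuine gap: the cancellation of the $C^{-1}$-corrections is asserted rather than proved, and the one identity you do state needs repair before it could be used.
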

\begin{remark}\label{20140117:rem}
If the constraints $\theta_i$'s are some generators of 
the algebra of differential polynomials $\mc V$,
then the quotient $\mc V/\mc I$ is still an algebra of differential polynomials
(in the remaining generators),
and we have the induced Poisson structure $\overline{H}^D$
on this quotient
(corresponding to the PVA $\lambda$-bracket of Theorem \ref{prop:dirac}(c)).
\end{remark}

%%%
\subsection{Dirac reduction of a bi-Poisson structure}\label{sec:3.2}

Let $(H_0,H_1)$ be a bi-Poisson structure on $\mc V$.
Let $\theta_1,\dots,\theta_m\in\mc V$ be central elements for $H_0$.
Suppose that the matrix pseudodifferential operator (cf. \eqref{20130529:eq1})
$C(\partial)=D_\theta(\partial)\circ H_1(\partial)\circ D_\theta^*(\partial)$
is invertible.
Then we can consider the Dirac modified Poisson structure $H_1^D$ (cf. \eqref{20130529:eq3}),
and the corresponding $\lambda$-bracket $\{\cdot\,_\lambda\,\cdot\}_1^D$ (cf. \eqref{dirac}),
and we have the following result:
\begin{theorem}[{\cite[Thm.2.3]{DSKV13c}}]\label{20140117:thm}
\begin{enumerate}[(a)]
\item
The matrices $H_0$ and $H_1^D$ form a compatible pair of Poisson structures on $\mc V$.
\item
The differential algebra ideal
$\mc I=\langle\theta_1,\dots,\theta_m\rangle_{\mc V}$
is a PVA ideal for both the $\lambda$-brackets 
$\{\cdot\,_\lambda\,\cdot\}_0$
and $\{\cdot\,_\lambda\,\cdot\}_1^D$,
and we have the induced compatible PVA $\lambda$-brackets on $\mc V/\mc I$.
\end{enumerate}
\end{theorem}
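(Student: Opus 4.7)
For part (a), the plan is to leverage Theorem \ref{prop:dirac}(a) for an entire \emph{pencil} of Poisson structures rather than for $H_1$ alone. Since $(H_0,H_1)$ is a bi-Poisson structure, for every scalar $t$ the operator $H_t:=tH_0+H_1$ is a non-local Poisson structure on $\mc V$. I will first unpack the centrality of the constraints $\theta_1,\dots,\theta_m$ with respect to $H_0$, namely $\{f_\lambda\theta_i\}_{H_0}=\{{\theta_i}_\lambda f\}_{H_0}=0$ for all $f\in\mc V$, into the pseudodifferential operator identities
$$
H_0\circ D_\theta^*=0
\,,\qquad
D_\theta\circ H_0=0
\,,
$$
by direct translation through the Master Formula \eqref{20110922:eq1}.

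With those identities in hand, setting $B:=H_1\circ D_\theta^*$ and $C:=D_\theta\circ H_1\circ D_\theta^*$, the corresponding objects for $H_t$ will satisfy $B_t=H_t\circ D_\theta^*=B$ and $C_t=D_\theta\circ H_t\circ D_\theta^*=C$, both independent of $t$. In particular $C_t$ remains invertible for every $t$, so the Dirac modification of $H_t$ is defined and computes out to
$$
H_t^D=H_t+B_tC_t^{-1}B_t^*=tH_0+H_1+BC^{-1}B^*=tH_0+H_1^D\,.
$$
Applying Theorem \ref{prop:dirac}(a) to each $H_t$ then shows that $tH_0+H_1^D$ is a non-local Poisson structure for every $t$, which is exactly the compatibility of $H_0$ and $H_1^D$ asserted in (a).

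For part (b), the two claims are handled independently. That $\mc I$ is an ideal for $\{\cdot\,_\lambda\,\cdot\}_1^D$ is a direct application of Theorem \ref{prop:dirac}(c) to $H_1$. That $\mc I$ is an ideal for $\{\cdot\,_\lambda\,\cdot\}_0$ follows at once from the centrality of the generators $\theta_i$ under $H_0$ together with sesquilinearity (i)--(ii) and the Leibniz rules (iii)--(iv), which propagate centrality from the $\theta_i$ to arbitrary elements of the differential ideal they generate. Both $\lambda$-brackets therefore descend to $\mc V/\mc I$, and compatibility on the quotient is inherited from compatibility on $\mc V$ established in part (a).

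The main subtlety to be careful about is choosing the correct parameterization of the pencil: writing $H_t=tH_0+H_1$ rather than $H_0+tH_1$ keeps $C_t=C$ invertible for \emph{all} $t$, so there is no singular value of the parameter to handle separately and no need to invoke a density or polynomial-identity argument to extend from generic $t$ to $t=0$. Beyond that observation I do not foresee any further technical obstacle: (a) reduces cleanly to Theorem \ref{prop:dirac}(a) for the pencil, and (b) is essentially a bookkeeping consequence of centrality.
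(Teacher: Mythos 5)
Your proof is correct and follows essentially the same route as the source this paper cites for the statement (\cite[Thm.2.3]{DSKV13c}, quoted here without reproducing the proof): centrality of the $\theta_i$ with respect to $H_0$ translates via the Master Formula into $D_\theta\circ H_0=0=H_0\circ D_\theta^*$, so $B$ and $C$ do not depend on the pencil parameter, the Dirac modification of $tH_0+H_1$ is $tH_0+H_1^D$, and applying Theorem \ref{prop:dirac} to the whole pencil yields both the compatibility in (a) and the ideal property for the modified bracket in (b). The only wording to adjust is in (b): general elements of $\mc I$ are not \emph{central} for $\{\cdot\,_\lambda\,\cdot\}_0$ (for instance $\{f_\lambda\, g\theta_\alpha\}_0=\{f_\lambda g\}_0\,\theta_\alpha$ need not vanish), but the sesquilinearity and Leibniz rules you invoke do give the inclusions $\{\mc I\,_\lambda\,\mc V\}_0,\ \{\mc V\,_\lambda\,\mc I\}_0\subset\mc I((\lambda^{-1}))$, which is exactly what is needed.
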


%%%
\subsection{Reduction of a bi-Hamiltonian hierarchy}\label{sec:3.3}

Let $(H_0,H_1)$ be a \emph{local} bi-Poisson structure
(i.e. consisting of matrix differential operators).
Suppose that we have a bi-Hamiltonian hierarchy 
$\frac{du}{dt_n}=P_n\in\mc V^\ell$, $n\in\mb Z_+$,
with respect to $(H_0,H_1)$,
and let $\tint h_n\in\mc V/\partial\mc V$
be a sequence of integrals of motion
satisfying the Lenard-Magri recursive condition \eqref{20130604:eq7}.
Let $\theta_1,\dots,\theta_m\in\mc V$ be central elements for $H_0$.
Assume that the matrix
$C(\partial)=D_\theta(\partial)\circ H_1(\partial)\circ D_\theta^*(\partial)\in\Mat_{m\times m}\mc V[\partial]$
is invertible in $\Mat_{m\times m}\mc V((\partial^{-1}))$.
Then, by Theorem \ref{20140117:thm},
$H_1^D=H_1+B(\partial)C^{-1}(\partial)B^*(\partial)$,
where $B(\partial)=H_1(\partial)\circ D_\theta^*(\partial)$,
is a (non-local) Poisson structure on $\mc V$ compatible to $H_0$.
Moroever, we have the following result:
\begin{proposition}\label{20131203:prop2}
Suppose that $\Ker B(\partial)$ and $\Ker C(\partial)$
have zero intersection over the linear closure $\widetilde{\mc K}$ of $\mc K$.
Then we have the Lenard-Magri recursive relations
\begin{equation}\label{20131203:eq1}
\frac{\delta h_{n-1}}{\delta u}\ass{H_1^D}P_n
\,\,,\,\,\,\,
\frac{\delta h_n}{\delta u}\ass{H_0}P_n
\,\,\,\,
\text{ for all } n\in\mb Z_+
\,.
\end{equation}
\end{proposition}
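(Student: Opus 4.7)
My plan is to first reduce the proposition to establishing the single association $\xi_{n-1}\ass{H_1^D}P_n$, where $\xi_k=\frac{\delta h_k}{\delta u}$, since the relation $\xi_n\ass{H_0}P_n$ is already in the original Lenard-Magri recursion \eqref{20130604:eq7}. The locality of $H_0,H_1$ unpacks the given associations into the plain equalities $P_n=H_1\xi_{n-1}=H_0\xi_n$.

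The next step will be the key algebraic identity $B^*\xi_{n-1}=0$. This should follow because centrality of the $\theta_\alpha$ with respect to $H_0$ encodes, via the Master Formula \eqref{20110922:eq1}, the operator identity $H_0\circ D_\theta^*=0$; taking adjoints and using $H_0^*=-H_0$ gives $D_\theta\circ H_0=0$. Combined with $B^*=-D_\theta\circ H_1$ (from skewadjointness of $H_1$),
\begin{equation*}
B^*\xi_{n-1}=-D_\theta H_1\xi_{n-1}=-D_\theta P_n=-D_\theta H_0\xi_n=0.
\end{equation*}

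Finally, I would prove $\xi_{n-1}\ass{H_1^D}P_n$ by applying Theorem \ref{20140116:thm2} to the rational expression for $H_1^D=H_1+BC^{-1}B^*$ written in the form \eqref{20130611:eq1} with $n=2$ and $\mc A=\{1,2\}$: for $\alpha=1$ take $A^1_1=H_1$ and $B^1_1=A^1_2=B^1_2=I$ (padding the local piece), and for $\alpha=2$ take $A^2_1=B$, $B^2_1=C$, $A^2_2=B^*$, $B^2_2=I$. The main obstacle will be verifying that this expression is minimal via Theorem \ref{20140116:thm1}. I expect the systems \eqref{20140116:eq2} and \eqref{20140116:eq3} to collapse, after the identity blocks kill most auxiliary variables, to the respective conditions $F^2_1\in\Ker B\cap\Ker C$ and $F^2_1\in\Ker B\cap\Ker C^*$ over $\widetilde{\mc K}$. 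Since $C=D_\theta H_1 D_\theta^*$ is skewadjoint, $\Ker C^*=\Ker C$, and the standing hypothesis forces $F^2_1=0$ in both cases, giving minimality. With minimality in hand, I would exhibit the explicit solution $F^1_1=F^1_2=F^2_2=\xi_{n-1}$, $F^2_1=0$ to \eqref{20140116:eq4} over $\mc K$: the condition $CF^2_1=B^*\xi_{n-1}$ reduces to $0=0$ by the previous paragraph, and $\sum_\alpha A^\alpha_1F^\alpha_1=H_1\xi_{n-1}+B\cdot 0=P_n$. Theorem \ref{20140116:thm2} then yields $\xi_{n-1}\ass{H_1^D}P_n$, completing the proof.
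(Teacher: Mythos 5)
Your proposal is correct and takes essentially the same route as the paper: the kernel hypothesis yields minimality of the expression $H_1+BC^{-1}B^*$ via Theorem \ref{20140116:thm1}, and the association $\frac{\delta h_{n-1}}{\delta u}\ass{H_1^D}P_n$ then follows from Theorem \ref{20140116:thm2} by taking the auxiliary unknown to be zero, using $B^*\frac{\delta h_{n-1}}{\delta u}=-D_\theta(\partial)P_n=0$. The only differences are cosmetic: you derive $D_\theta(\partial)P_n=0$ directly from the operator identity $D_\theta\circ H_0=0$ encoded by centrality (the paper instead cites \cite[Lem.5.2(b)]{DSKV13c} that central elements are constant densities), and you write out explicitly, with identity padding and the skewadjointness of $C$, the minimality verification that the paper delegates to the citation.
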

\begin{proof}
According to Theorem \ref{20140116:thm1},
the condition $\Ker B(\partial)$ and $\Ker C(\partial)$
have zero intersection
is equivalent to saying that 
$H_1^D=H_1+B(\partial)C^{-1}(\partial)B^*(\partial)$
is a minimal rational expression for $H_1^D$.
By assumption, we have 
$P_n=H_0\frac{\delta h_n}{\delta u}=H_1\frac{\delta h_{n-1}}{\delta u}$.
By Theorem \ref{20140116:thm2}
the association relation $\frac{\delta h_{n-1}}{\delta u}\ass{H_1^D}P_n$
holds if there exists $F_n\in\mc K^m$ such that
$$
B^*(\partial)\frac{\delta h_{n-1}}{\delta u}
=C(\partial)F_n
\qquad\text{and}\qquad
B(\partial)F_n=0\,.
$$
Note that $B^*(\partial)\frac{\delta h_{n-1}}{\delta u}
=-D_{\theta}(\partial)H_1(\partial)\frac{\delta h_{n-1}}{\delta u}
=-D_{\theta}(\partial)P_n$.
Since the elements $\theta_\alpha$'s are central for the Poisson structure $H_0$,
they are constant densities for the Hamiltonian equations \eqref{20130604:eq6}
(see \cite[Lem.5.2(b)]{DSKV13c}). Thus we have $D_{\theta}(\partial)P_n=0$,
for every $n\in\mb Z_+$.
Therefore we can choose $F_n$ to be the zero vector in $\mc K^\ell$, for every $n\in\mb Z_+$.
\end{proof}

%%%
\section{Dirac reduced homogeneous DS hierarchy}

First we review the construction of the homogeneous Drinfeld-Sokolov hierarchy,
following \cite{DSKV13a}.

Let $\mf g$ be a simple finite-dimensional Lie algebra. 
Fix a non-degenerate symmetric invariant bilinear form $(\cdot\,|\,\cdot)$ on $\mf g$,
and a regular semisimple element $s\in\mf g$.
We have the direct sum
decomposition $\mf g=\mf h\oplus\mf h^{\perp}$, 
where $\mf h=\Ker(\ad s)$ is a Cartan subalgebra,
and $\mf h^{\perp}=\im(\ad s)$ is its orthogonal complement with respect to the
bilinear form $(\cdot\,|\,\cdot)$,
and it is the direct sum of root spaces.

Let $\mc V=S(\mb F[\partial]\mf g)$,
the algebra of differential polynomials in a basis of $\mf g$.
We define a $\lambda$-bracket on $\mc V$ by
\begin{equation}\label{20131204:eq1}
\{a_{\lambda}b\}_z=[a,b]+(a| b)\lambda+z(s|[a,b])\,,
\end{equation}
for $a,b\in\mf g$, and we extend it to a $\lambda$-bracket on
$\mc V$ by \eqref{20110922:eq1}, thus obtaining a PVA structure.
In equation \eqref{20131204:eq1} $z\in\mb F$ is a parameter.

Let $\ell=\rank(\mf g)$ be the rank of $\mf g$, and let $\Delta$ be the set of roots of $\mf g$.
Choose a basis $\mf g$ as follows:
$\mc B=\{x_i\}_{i=1}^\ell\cup\{e_\alpha\}_{\alpha\in\Delta}$,
union of an orthonormal basis of $\mf h$ and a collection of root vectors
such that $(e_\alpha|e_{-\alpha})=1$.
Hence $\mc V=\mb C[x_i^{(n)},e_{\alpha}^{(n)}\mid i\in\{1,\dots,\ell\},\alpha\in\Delta,n\in\mb Z_+]$ is 
the algebra of differential polynomials generated by the elements of the basis $\mc B$.
Equation \eqref{20131204:eq1} defines a (local) bi-Poisson structure $(H_0,H_1)$ on $\mc V$ 
given by ($i,j\in\{1,\dots,\ell\},\alpha\in\Delta$)
\begin{equation}\label{20131204:eq2}
\left\{
\begin{array}{l}
(H_0)_{ij}(\partial)=0\\
(H_0)_{\alpha i}(\partial)=0\\
(H_0)_{\alpha\beta}(\partial)=\delta_{\alpha,-\beta}\beta(s)
\end{array}\right.
\quad\text{and}\quad
\left\{
\begin{array}{l}
(H_1)_{ij}(\partial)=\delta_{ij}\partial\\
(H_1)_{\alpha i}(\partial)=\alpha(x_i)e_\alpha\\
(H_1)_{\alpha\beta}(\partial)=[e_\beta,e_\alpha]+\delta_{\alpha,-\beta}\partial\,.
\end{array}
\right.
\end{equation}

It is proved in
\cite{DSKV13a} that, for any $a\in\mf h$, we have an infinite sequence of local functionals
$\{\tint h_n\}_{n\geq-1}\subset\mc V/\partial\mc V$,
satisfying the Lenard-Magri recursive relations \eqref{20130604:eq7},
and $h_{-1}=0$, $h_0=a$.
The next two functionals of the sequence have densities
$$
h_1=\frac12\sum_{\alpha\in\Delta}
\frac{\alpha(a)}{\alpha(s)}
e_\alpha e_{-\alpha}\,,
$$ 
and
$$
\begin{array}{l}
\displaystyle{
h_2=\frac12\sum_{\alpha\in\Delta}\frac{\alpha(a)}{\alpha(s)^2}e_{-\alpha}e_{\alpha}^\prime
+\frac12\sum_{\alpha\in\Delta}\frac{\alpha(a)}{\alpha(s)^2}e_{\alpha}e_{-\alpha}[e_{-\alpha},e_{\alpha}]
}\\
\displaystyle{
+\frac13\sum_{\substack{\alpha,\beta\in\Delta\\\alpha\neq\beta}}\frac{\alpha(a)}{\alpha(s)\beta(s)}
e_{\alpha}e_{-\beta}[e_{-\alpha},e_{\beta}]\,.
}
\end{array}
$$
The corresponding Hamiltonian equations \eqref{20130604:eq6}
are: $\frac{dx_i}{dt_0}=\frac{dx_i}{dt_1}=\frac{dx_i}{dt_2}=0$,
for $i=1,\dots,\ell$, and, for $\alpha\in\Delta$,
\begin{equation}\label{hier:DS}
\begin{array}{l}
\displaystyle{
\frac{de_\alpha}{dt_0}
=\alpha(a)e_\alpha
\,,\qquad
\frac{de_\alpha}{dt_1}
=\frac{\alpha(a)}{\alpha(s)}e_{\alpha}^\prime
+\sum_{\beta\in\Delta}
\frac{\beta(a)}{\beta(s)}
e_{-\beta}[e_{\beta},e_\alpha]\,,
}\\
\displaystyle{
\frac{de_\alpha}{dt_2}
=\frac{\alpha(a)}{\alpha(s)^2}e_{\alpha}''
+\frac{\alpha(a)}{\alpha(s)^2}(e_{\alpha}[e_{-\alpha},e_{\alpha}])'
+\sum_{\beta\in\Delta}\frac{\beta(a)}{\beta(s)^2}
e_\beta'[e_{-\beta},e_{\alpha}]
}
\\
\displaystyle{
+\frac13\sum_{\beta\in\Delta\backslash\{\alpha\}}
\left(\frac{\alpha(a)+\beta(a)}{\alpha(s)\beta(s)}
+\frac{\beta(a)}{\beta(s)(\alpha(s)-\beta(s))}\right)
(e_\beta[e_{-\beta},e_\alpha])'
}
\\
\displaystyle{
+\sum_{\beta\in\Delta}\frac{\beta(a)}{\beta(s)^2}\left(
e_{\beta}[e_{-\beta},e_{\beta}][e_{-\beta},e_{\alpha}]
-\frac12(\alpha|\beta)e_{\alpha}e_{\beta}e_{-\beta}
\right)
}
\\
\displaystyle{
+\frac13\sum_{\substack{\beta,\gamma\in\Delta\\\beta\neq\gamma}}
\frac{2\beta(a)\gamma(s)-\gamma(a)\beta(s)}{\beta(s)\gamma(s)(\beta(s)-\gamma(s))}
e_\beta e_{-\gamma}[[e_{-\beta},e_{\gamma}],e_{\alpha}]\,.
}
\end{array}
\end{equation}
It follows from \cite[Rem.2.7]{DSKV13a} 
that, for every $\alpha\in\Delta$,
\begin{equation}\label{20140119:eq1}
\frac{\delta h_n}{\delta e_\alpha}
=(-1)^{n+1}\frac{\alpha(a)}{\alpha(s)^{n}}e_{-\alpha}^{(n-1)}
+\text{higher polynomial order terms}\,.
\end{equation}
In particular, the elements $\frac{\delta h_n}{\delta u}$ are linearly independent.

Let $\mc I$ be the differential ideal generated by
the variables $x_i$, $i=1,\ldots,\ell$. Clearly, as differential algebras,
$$
\mc V/\mc I\simeq
\overline{\mc V}=
\mb F[e_\alpha^{(n)}\mid \alpha\in\Delta,n\in\mb Z_+]\,.
$$
Note that, by
equation \eqref{20131204:eq2},
the elements $x_i$, $i=1,\dots,\ell$, are central for the Poisson structure $H_0$.
Consider the Dirac modification $H_1^D$ 
of $H_1$ by the constraints $\{x_i\}_{i=1}^\ell$,
defined by the equation \eqref{20130529:eq3}:
$$
H_1^D(\partial)=H_1(\partial)+B(\partial)\circ C^{-1}(\partial)\circ B^*(\partial)\,,
$$
where the matrices $B(\partial)\in\Mat_{(\ell+|\Delta|)\times\ell}\mc V[\partial]$ 
and $C(\partial)\in\Mat_{\ell\times\ell}\mc V[\partial]$
have entries
\begin{equation}\label{20131204:eq3}
\begin{array}{ll}
B_{ij}(\partial)=C_{ij}(\partial)=(H_1)_{ij}(\partial)=\delta_{ij}\partial\,,
&
i,j=1,\dots,\ell\\
B_{\alpha i}(\partial)=
(H_1)_{\alpha i}(\partial)
=\alpha(x_i)e_{\alpha}\,,
&
i=1,\dots,\ell,\alpha\in\Delta\,.
\end{array}
\end{equation}
By Theorem \ref{20140117:thm}, we have a bi-Poisson structure
$(H_0,H_1^D)$ on $\mc V$,
and the induced bi-Poisson structure $(\overline{H}_0,\overline{H}_1^D)$ on $\overline{\mc V}$.
It is given by ($\alpha,\beta\in\Delta$)
$$
\begin{array}{l}
\displaystyle{
(\overline{H}_0)_{\alpha\beta}(\lambda)=\delta_{\alpha,-\beta}\beta(s)
}\\
\displaystyle{
(\overline{H}_1^D)_{\alpha,-\alpha}(\lambda)
=\partial+(\alpha|\alpha)e_\alpha\partial^{-1}\circ e_{-\alpha}\,,
}\\
\displaystyle{
(\overline{H}_1^D)_{\alpha\beta}(\lambda)
=[e_\beta,e_\alpha]
-(\alpha|\beta)e_\alpha\partial^{-1}\circ e_{\beta}
\,,\text{ for }\alpha\neq-\beta\,.
}
\end{array}
$$
\begin{proposition}
The Lenard-Magri recursive relations \eqref{20131203:eq1} hold for the bi-Poisson structure
$(H_1^D,H_0)$. Hence, we get an induced bi-Hamiltonian hierarchy in $\overline{\mc V}$.
\end{proposition}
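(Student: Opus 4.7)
The plan is to invoke Proposition \ref{20131203:prop2} with the bi-Poisson structure $(H_0,H_1)$ of \eqref{20131204:eq2}, the constraints $\theta_i=x_i$ ($i=1,\dots,\ell$), and the Lenard--Magri sequence $\{\tint h_n\}_{n\geq -1}$ constructed in \cite{DSKV13a}. The Poisson structures, the Hamiltonians, and the vector fields $P_n$ on the original (unreduced) $\mc V$ already satisfy the recursion \eqref{20130604:eq7}, so the entire burden of the proof is to verify the kernel hypothesis of Proposition \ref{20131203:prop2}, namely that $\Ker B(\partial)\cap\Ker C(\partial)=0$ over the linear closure $\widetilde{\mc K}$, for the matrices $B(\partial)$ and $C(\partial)$ given in \eqref{20131204:eq3}.

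This verification is a direct linear-algebraic computation. From \eqref{20131204:eq3} we have $C(\partial)=\partial\, I_\ell$, so any $F=(F_1,\dots,F_\ell)^T\in\Ker C(\partial)\subset\widetilde{\mc K}^{\,\ell}$ has constant entries $F_i\in\overline{\mb F}$. For such an $F$ to also lie in $\Ker B(\partial)$, the remaining rows of the equation $B(\partial)F=0$ give
\begin{equation*}
e_\alpha\sum_{i=1}^\ell \alpha(x_i)\,F_i=0\quad\text{for every }\alpha\in\Delta.
\end{equation*}
Since $e_\alpha$ is a nonzero element of $\widetilde{\mc K}$, this forces $\alpha\bigl(\sum_{i=1}^\ell F_i x_i\bigr)=0$ for every root $\alpha$. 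Because $\{x_i\}$ is an orthonormal basis of $\mf h$ and the roots span $\mf h^*$ (as $\mf g$ is simple), we conclude that $\sum_i F_i x_i=0$, whence $F=0$. Thus the hypothesis of Proposition \ref{20131203:prop2} is satisfied, and the recursive relations \eqref{20131203:eq1} hold on $\mc V$ for the pair $(H_1^D,H_0)$.

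Finally, by Theorem \ref{20140117:thm}(b) the ideal $\mc I=\langle x_1,\dots,x_\ell\rangle_{\mc V}$ is a PVA ideal for both $\{\cdot\,_\lambda\,\cdot\}_0$ and $\{\cdot\,_\lambda\,\cdot\}_1^D$, so all the ingredients of the Lenard--Magri scheme descend to $\overline{\mc V}=\mc V/\mc I$, producing the asserted bi-Hamiltonian hierarchy. There is no real obstacle here: the only non-formal input is the semisimplicity of $\mf g$, used precisely at the step where one concludes that an element of $\mf h\otimes\overline{\mb F}$ annihilated by every root must vanish.
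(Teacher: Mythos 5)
Your verification of the kernel hypothesis is essentially the paper's own argument: the paper also reduces everything to Proposition \ref{20131203:prop2} and to the observation that the rows $(B(\partial)F)_\alpha=\big(\sum_i\alpha(x_i)F_i\big)e_\alpha$ force $F=0$ because $\Delta$ spans $\mf h^*$. (The paper in fact shows the slightly stronger statement $\Ker B(\partial)=0$ over $\widetilde{\mc K}$, without first using $C(\partial)=\partial I_\ell$ to make the $F_i$ constant; your variant, checking only $\Ker B\cap\Ker C=0$, is exactly the stated hypothesis and is equally valid.)

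There is, however, one omission in your treatment of the second assertion. You dispose of ``hence we get an induced bi-Hamiltonian hierarchy in $\overline{\mc V}$'' purely formally, by descent of the two $\lambda$-brackets through Theorem \ref{20140117:thm}(b). But the point of the reduction, in view of the integrability claim for \eqref{eq1.1}, is that the reduced hierarchy does not collapse: after setting the $x_i$ to zero the reduced vector fields $P_n$ could a priori become linearly dependent (condition (C4) is not automatic under reduction). The paper closes this by invoking \eqref{20140119:eq1}: the leading term of $\frac{\delta h_n}{\delta e_\alpha}$ is $(-1)^{n+1}\frac{\alpha(a)}{\alpha(s)^{n}}e_{-\alpha}^{(n-1)}$, which survives in $\overline{\mc V}$, so the images of the $\frac{\delta h_n}{\delta u}$ in $\overline{\mc V}^{|\Delta|}$ are linearly independent, and since $\overline{H}_0$ is an invertible constant matrix the images of the $P_n$ are linearly independent as well. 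You should add this (easy) step; without it the induced sequence of equations is not known to be an infinite hierarchy, which is what the proposition is meant to deliver.
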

\begin{proof}
By Proposition \ref{20131203:prop2} it suffices to show that $\ker B(\partial)=0$ over 
the linear closure $\widetilde{\mc K}$ of $\mc K$.
Let $F=(F_i)_{i=1}^\ell\in\widetilde{\mc K}$ be an element of the kernel of $B(\partial)$.
We have, for $\alpha\in\Delta$,
$$
\left(B(\partial)F\right)_\alpha
=\left(\alpha(x_1)F_1+\dots+\alpha(x_\ell)F_\ell\right)e_\alpha\,.
$$
Since $\Delta$ spans $\mf h^*$, it follows that $F=0$.
To conclude, we just observe that, by equation \eqref{20140119:eq1},
the images of elements $\frac{\delta h_n}{\delta u}$ in $\overline{\mc V}^{|\Delta|}$
are linearly independent.
Since $\overline{H}_0$ is an invertible constant matrix,
the images of the elements $P_n$ in $\overline{\mc V}^{|\Delta|}$
are linearly independent as well.
\end{proof}
\begin{remark}\label{20140122:rem}
It follows from \cite[Prop.2.10]{BDSK09}
and the definition of the Lie bracket between Hamiltonian functionals given in \cite[Eq.(7.8)]{DSK13}
(using the fact that $H_0$ is local)
that all the $\tint h_n$'s obtained by taking all possible $a\in\mf h$
are in involution.
\end{remark}
The first equations of the reduced bi-Hamiltonian hierarchy are ($\alpha\in\Delta$)
$$
\begin{array}{l}
\displaystyle{
\frac{de_\alpha}{dt_0}=\alpha(a)e_\alpha
\,,\qquad
\frac{de_\alpha}{dt_1}=
\frac{\alpha(a)}{\alpha(s)}
e_{\alpha}^\prime
+\sum_{\beta\in\Delta\backslash\{-\alpha\}}
\frac{\beta(a)}{\beta(s)}
e_{-\beta}[e_{\beta},e_\alpha]\,,
}
\\
\displaystyle{
\frac{de_\alpha}{dt_2}
=\frac{\alpha(a)}{\alpha(s)^2}e_{\alpha}''
+\sum_{\beta\in\Delta\backslash\{\alpha\}}\frac{\beta(a)}{\beta(s)^2}
e_\beta'[e_{-\beta},e_{\alpha}]
}
\\
\displaystyle{
+\frac13\sum_{\beta\in\Delta\backslash\{\alpha\}}
\left(\frac{\alpha(a)+\beta(a)}{\alpha(s)\beta(s)}
+\frac{\beta(a)}{\beta(s)(\alpha(s)-\beta(s))}\right)
(e_\beta[e_{-\beta},e_\alpha])'
}
\\
\displaystyle{
-\frac12\sum_{\beta\in\Delta}\frac{\beta(a)}{\beta(s)^2}
(\alpha|\beta)e_{\alpha}e_{\beta}e_{-\beta}
+\frac13\!\!\!\sum_{\substack{\beta,\gamma\in\Delta\\\beta\neq\gamma,\gamma+\alpha}}\!\!\!
\frac{2\beta(a)\gamma(s)-\gamma(a)\beta(s)}{\beta(s)\gamma(s)(\beta(s)-\gamma(s))}
e_\beta e_{-\gamma}[[e_{-\beta},e_{\gamma}],e_{\alpha}]\,.
}
\end{array}
$$
\begin{remark}
For $\mf g=\mf{sl}_2$ we have $\Delta=\{\alpha,-\alpha\}$.
Letting $s=a=[e_\alpha,e_{-\alpha}]$,
the first non trivial equation of the reduced bi-Hamiltonian 
hierarchy is
$$
\left\{\begin{array}{l}
\displaystyle{
\frac{de_{\alpha}}{dt_2}=\frac12e_\alpha''-e_\alpha^2e_{-\alpha}
}
\\
\displaystyle{
\frac{de_{-\alpha}}{dt_2}=-\frac12e_{-\alpha}''+e_{\alpha}e_{-\alpha}^2\,.
}
\end{array}\right.
$$
Hence, the reduced DS homogeneous hierarchy for the Lie algebra
$\mf g=\mf{sl}_2$ coincides with the \emph{NLS hierarchy} (AKNS),
and $(\overline{H}_0,\overline{H}_1^D)$ coincides with its well-known bi-Poisson
structure. 
\end{remark}

%%%
\section{Dirac reduced minimal DS hierarchy}
\label{sec:4.3}

We recall here the construction of the classical $\mc W$-algebra associated to a 
minimal nilpotent element following \cite{DSKV13b}.

Let $\mf g$ be a simple Lie algebra with a non-degenerate symmetric invariant
bilinear form $(\cdot\,|\,\cdot)$, and let $f\in\mf g$ be a minimal nilpotent 
element, that is a lowest root vector of $\mf g$. Let $\{f,h=2x,e\}\subset\mf g$ be 
an $\mf{sl}_2$-triple.
The $\ad x$-eigenspace decomposition is
$$
\mf g=\mb Ff\oplus\mf g_{-\frac12}\oplus\mf g_{0}\oplus\mf g_{\frac12}\oplus\mb Fe\,.
$$
Note that $(x|a)=0$ for all $a\in\mf g_0^f$.
Hence, the subalgebra $\mf g_0\subset\mf g$ admits the orthogonal decomposition
$\mf g_0=\mf g_0^f\oplus\mb Fx$.
For $a\in\mf g_0$ we denote
by $a^\sharp$ its projection to $\mf g_0^f$.

We fix a basis of $\mf g^f=\mf g_0^f\oplus\mf g_{-\frac12}\oplus\mb Ff$ as follows.
Let $\{a_i\}_{i\in J_0^f}\subset\mf g_0^f$ be an orthonormal basis of $\mf g_0^f$ 
with respect to $(\cdot\,|\,\cdot)$.
Let also $\{v_k\}_{k\in J_{-\frac12}}\subset\mf g_\frac12$ be a basis of
$\mf g_{\frac12}$ and let $\{v^k\}_{k\in J_{-\frac12}}\subset\mf g_{\frac12}$
be the dual basis with respect to the nondegenerate skewsymmetric pairing 
$(f|[\cdot\,,\,\cdot])$ on $\mf g_{\frac12}$.
Equivalently, letting $u_k=[f,v_k]$, we have that 
$\{u_k\}_{k\in J_{-\frac12}}\subset\mf g_{-\frac12}$ 
and $\{v^k\}_{k\in J_{-\frac12}}\subset\mf g_{\frac12}$
are dual bases with respect to $(\cdot\,|\,\cdot)$.

An explicit description of the classical $\mc W$-algebra $\mc W=\mc W_z(\mf g,f)$,
associated to the Lie algebra $\mf g$ and the minimal nilpotent element $f$,
is as follows.
As a differential algebra, it is $\mc W=S(\mb F[\partial]\mf g^f)$,
namely the algebra of differential polynomials
in the differential variables $\{a_i\}_{i\in J_0^f}\subset\mf g_0^f$, 
$\{u_k\}_{k\in J_{\frac12}}\subset\mf g_{-\frac12}$, and $f$.
We also let $L=f+\frac{1}{2}\sum_{i\in J_0^f}a_i^2\in\mc W$
(which we can take as a differential generator in place of $f$).
The $\lambda$-brackets on generators are given by Table
\ref{table1} ($a,b\in\mf g_0^f$, $u,u_1\in\mf g_{-\frac12}$).

%%%%%%%% TABLE %%%%%%%%%%%%%%%%%%%%%%%
\begin {table}[H]
\caption{$\lambda$-brackets among generators of $\mc W$ for minimal nilpotent $f$} \label{table1} 
\begin{center}
\begin{tabular}{c||c|c|c}
\phantom{$\Bigg($} 
$\{\cdot\,_\lambda\,\cdot\}_{z}$
& $L$ & $b$ & $u_1$ \\
\hline
\hline \phantom{$\Bigg($} 
$L$ & $\begin{array}{l} (\partial+2\lambda)L \\ -(x| x)\lambda^3+4(x|x)z\lambda \end{array}$
& $(\partial+\lambda)b$ & $\big(\partial+\frac32\lambda\big)u_1$ \\
\hline \phantom{$\Bigg($}
$a$ & $\lambda a$ & $[a,b]+(a|b)\lambda$ & $[a,u_1]$ \\
\hline \phantom{$\Bigg($}
$u$ & $\big(\frac12\partial+\frac32\lambda\big)u$ & $[u,b]$ & eq.\eqref{20130320:eq2} 
\end{tabular}
\end{center}
\end{table}
%%%%%%%%%%%%%%%%%%%%%%%%%%%%%%%%%%%

\noindent
The $\lambda$-bracket of two elements $u$ and $u_1$ of $\mf g_{-\frac12}$is
\begin{equation}\label{20130320:eq2}
\begin{array}{c}
\displaystyle{
\{u_\lambda u_1\}_{z}=
\sum_{k\in J_{-\frac12}}[u,v^k]^\sharp[u_1,v_k]^\sharp
+\big(\partial+2\lambda\big)[u,[e,u_1]]^\sharp
} \\
\displaystyle{
+\frac{(e|[u,u_1])}{2(x|x)}f
-\lambda^2(e|[u,u_1])+z(e|[u,u_1])\,.
}
\end{array}
\end{equation}

Associated to this $1$-parameter family of PVA $\lambda$-brackets
we have compatible Poisson structures 
$H_0$ and $H_1$,
defined by 
$\{\cdot\,_\lambda\,\cdot\}_{z}=\{\cdot\,_\lambda\,\cdot\}_{H_1}-z\{\cdot\,_\lambda\,\cdot\}_{H_0}$.
It follows by \cite[Theorem 4.18]{DSKV13a}
that we can find an infinite sequence of linearly independent local
functionals $\tint h_n\in\mc W/\partial\mc W$, $n\in\mb Z_+$, 
starting with $h_0=f$,
satisfying the Lenard-Magri recursive relations \eqref{20130604:eq7}.
The first few integrals of motion and the corresponding equations of the  
integrable hierarchy are computed in \cite[Section 6]{DSKV13b}.

%The next conserved density is
%(cf. \cite[Section 6]{DSKV13b})
%$$
%h_1=
%-\frac1{8(x|x)}f^2
%-\frac12\sum_{k\in J_{-\frac12}} [f,v^k]u_k^\prime
%-\frac12 \sum_{i\in J_0^f,k\in J_{-\frac12}}
%a_i [a^i,[f,v^k]] u_k
%\,.
%$$
%%
%The associated evolution equations are ($a\in\mf g_0^f,\,u\in\mf g_{-\frac12}$)
%\begin{equation}\label{20140123:eq1}
%\frac{da}{dt_0}=0
%\,,\,\,
%\frac{du}{dt_0}=u'
%-\sum_{i\in J_0^f}a_i[a^i,u]
%\,,\,\,
%\frac{d f}{dt_0}=f'
%\,,
%\end{equation}
%and
%\begin{equation}\label{20140123:eq2}
%\begin{array}{rcl}
%\displaystyle{
%\frac{da}{dt_1}
%}
%&=&0,
%\\
%\displaystyle{
%\frac{du}{dt_1}
%}
%&=&
%\displaystyle{
%u'''
%-2\sum_{i\in J_0^f}a_i[a^i,u]''
%-\sum_{i\in J_0^f}a_i' [a^i,u]'
%-\sum_{i\in J_0^f}\!\!\left(a_i[a^i,u]\right)''
%}\\
%&+&\displaystyle{
%\sum_{i,j\in J_0^f}a_ia_j[a^i,[a^j,u]'
%+2\!\!\sum_{i,j\in J_0^f}a_j\left(a_i[a^i[a^j,u]]\right)'
%+\!\!\sum_{i,j\in J_0^f}
%a_i[a^i[a^j,u]]a_j'
%}\\
%&-&\displaystyle{
%\frac3{4(x|x)}fu'
%-\frac3{8(x|x)}uf'
%+\frac1{4(x|x)}f\sum_{i\in J_0^f}a_i[a^i,u]}\\
%&-&\displaystyle{
%\frac12\sum_{i\in J_0^f,k\in J_{-\frac12}}[a_i,u][a^i,[f,v^k]][f,v_k]
%-\sum_{i,j,l\in J_0^f}
%a_ia_ja_l[a^i,[a^j,[a^l,u]]],
%}\\
%\displaystyle{
%\frac{d f}{dt_1}
%}
%&=&
%\displaystyle{
%\!\!\frac14f'''
%-\frac3{4(x|x)}ff'
%+\frac32\!\sum_{k\in J_{-\frac12}}\!u_k[f,v^k]''
%-\frac{3}{2}\!\!\!\sum_{i\in J_0^f,k\in J_{-\frac12}}\!\!\!\!\!\!
%\left(a_i[a^i,[f,v^k]]u_k\right)'
%.}
%\end{array}
%\end{equation}

Note that, by Table \ref{table1}, all elements of $\mf g_0^f$ are central
for the Poisson structure $H_0$. 
Therefore, by Theorem \ref{20140117:thm} we have a Dirac modified bi-Poisson structure
$(H_0,H_1^D)$ on $\mc W$
with respect to the constraints $\{a_i\}_{i\in J_0^f}$,
and the induced Dirac reduced bi-Poisson structure 
$(\overline{H}_0,\overline{H}_1^D)$ on $\overline{\mc W}=\mc W/\langle a_i\rangle_{i\in J_0^f}$.
Let $\ell=|J_0^f|$ and $N=|J_0^f|+|J_{-\frac12}|+1$.
By Definition \ref{20130529:def}
we have $H_1^D=H_1+BC^{-1}B^*$, where
$C(\partial)\in\Mat_{\ell\times\ell}\mc W[\partial]$ 
and 
$B(\partial)\in\Mat_{N\times\ell}\mc W[\partial]$
are matrix differential operators with entries as follows
($i,j\in J_0^f$, $k\in J_{-\frac12}$):
\begin{equation}\label{20140121:eq1}
\left\{
\begin{array}{l}
B_{ij}(\partial)=C_{ij}(\partial)
=[a_j,a_i]+\delta_{ij}\partial\,\\
B_{kj}(\partial)
=[a_j,u_k]\,,\\
B_{Nj}(\partial)
=a_j\partial\,.
\end{array}
\right.
\end{equation}
\begin{proposition}\label{20140123:prop}
The Lenard-Magri recursive relations \eqref{20131203:eq1} hold for the bi-Poisson structure
$(H_1^D,H_0)$.
Consequently, we get an induced integrable bi-Hamiltonian hierarchy in $\overline{\mc W}$.
\end{proposition}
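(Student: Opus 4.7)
The plan is to apply Proposition \ref{20131203:prop2} to the constraints $\{a_i\}_{i\in J_0^f}$, which will yield the Lenard--Magri recursion \eqref{20131203:eq1} for $(H_1^D,H_0)$ and hence an induced bi-Hamiltonian hierarchy on $\overline{\mc W}$, and then to verify condition (C4) to upgrade this to genuine integrability. Two tasks remain: (i) show $\Ker B(\partial)\cap\Ker C(\partial)=0$ over the linear closure $\widetilde{\mc K}$, for the matrices $B,C$ of \eqref{20140121:eq1}; and (ii) establish the linear independence of the resulting $P_n$'s in $\overline{\mc W}^{|J_{-\frac12}|+1}$.

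For (i), inspection of \eqref{20140121:eq1} shows that the top $\ell\times\ell$ block of $B(\partial)$ coincides with $C(\partial)$, so $\Ker B\cap\Ker C=\Ker B$, and it is enough to prove $\Ker B(\partial)=0$. Let $F=(F_j)_{j\in J_0^f}\in\widetilde{\mc K}^{\ell}$ lie in $\Ker B$. The top $\ell$ equations reduce to the matrix linear ODE $F'=-MF$, where $M_{ij}=\sum_k c_{ji}^k a_k$ and $[\hat a_j,\hat a_i]=\sum_k c_{ji}^k\hat a_k$ are the structure constants of $\mf g_0^f$. Since $M$ has coefficients in $\mc K_1:=\mathrm{Frac}(\mb F[a_i^{(n)}])$ and both $\widetilde{\mc K}$ and $\widetilde{\mc K_1}$ share the same field of constants $\overline{\mb F}$, every solution of this ODE in $\widetilde{\mc K}^{\ell}$ already lies in $\widetilde{\mc K_1}^{\ell}$. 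A short calculation using the invariance identity $([\hat a_l,\hat a_j]|\hat a_k)+([\hat a_l,\hat a_k]|\hat a_j)=0$ then shows that the last row of $BF=0$, namely $\sum_j a_j F_j'=0$, is automatic, so only the middle rows provide further constraints.

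Writing $[\hat a_j,\hat u_k]=\sum_{k'}d_{jk}^{k'}\hat u_{k'}$, the middle rows read $\sum_{k'}u_{k'}\bigl(\sum_j d_{jk}^{k'}F_j\bigr)=0$ for each $k\in J_{-\frac12}$. The $u_{k'}$ are adjoined to $\mc K_1$ as differentially independent variables to form $\mc K$, and closing under linear ODEs does not introduce algebraic relations of the $u_{k'}$ with elements of $\widetilde{\mc K_1}$; they are therefore linearly independent over $\widetilde{\mc K_1}$, and the inner sums must vanish. Equivalently, the element $A:=\sum_j F_j\hat a_j\in\widetilde{\mc K_1}\otimes\mf g_0^f$ commutes with all of $\mf g_{-\frac12}$. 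A short Jacobi computation using $\mf g_{3/2}=0$ yields $[A,e]=0$, and hence $[A,\mf g_{\frac12}]=0$ via $\mf g_{\frac12}=[e,\mf g_{-\frac12}]$; thus $A$ centralizes $[\mf g_{-\frac12},\mf g_{\frac12}]+\mb F h$. The latter equals $\mf g_0$, because $\mb F f+\mf g_{-\frac12}+[\mf g_{-\frac12},\mf g_{\frac12}]+\mb F h+\mf g_{\frac12}+\mb F e$ is a nonzero Lie ideal of the simple $\mf g$ and hence equals $\mf g$. Therefore $A$ centralizes all of $\mf g$, so $A=0$ and $F=0$.

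For (ii), having obtained the recursion, the argument mirrors the homogeneous case: a leading-term analysis of $\delta h_n/\delta u$ on $\mc W$ (the analogue of \eqref{20140119:eq1}, available from \cite{DSKV13b}) shows that the images of $\delta h_n/\delta u$ are linearly independent in $\overline{\mc W}^{|J_{-\frac12}|+1}$; the reduced structure $\overline{H}_0$ is block-diagonal with block $-4(x|x)\partial$ on $L$ and the nondegenerate constant block coming from $(e|[\cdot,\cdot])$ on $\mf g_{-\frac12}$, so it is non-degenerate, and $P_n=\overline{H}_0(\delta h_n/\delta u)$ are likewise linearly independent. The main obstacle is step (i), specifically justifying the descent of the ODE solutions to $\widetilde{\mc K_1}^{\ell}$ and the linear independence of the $u_{k'}$ over that subfield inside $\widetilde{\mc K}$; once this ``separation of variables'' inside the linear closure is in hand, the remaining Lie-theoretic steps are essentially routine.
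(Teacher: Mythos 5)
Your proposal is correct and follows essentially the same route as the paper: reduce via Proposition \ref{20131203:prop2} to $\Ker B(\partial)=0$ over $\widetilde{\mc K}$, solve the top-block ODE to write the solution as a fundamental matrix over the linear closure of the field generated by the $a_i$'s times a constant vector, use the middle rows together with the (implicitly assumed, in both arguments) linear independence of the $u_k$'s over that subfield, conclude by a Lie-theoretic argument resting on $[\mf g_{\frac12},\mf g_{-\frac12}]=\mf g_0$, and verify (C4) from the linear independence of the reduced $\frac{\delta h_n}{\delta u}$ and the non-degeneracy of $\overline{H}_0$. The only immaterial deviations are that you check the last row $a_j\partial$ is automatically satisfied (the paper simply does not use that row) and that you finish by showing $A=\sum_j F_j a_j$ centralizes all of $\mf g$ and invoking simplicity, where the paper instead pairs with $\mf g_{\frac12}$ and uses non-degeneracy of $(\cdot\,|\,\cdot)$ on $\mf g_0^f$.
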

\begin{proof}
For the first assertion,
by Proposition \ref{20131203:prop2} it suffices to show that $\ker B(\partial)=0$ in
$\widetilde{\mc K}^\ell$
(recall that $\widetilde{\mc K}$ denotes the linear closure of $\mc K$,
and $\mc K$ is the field of fractions of $\mc W$).
Let $F=(F_i)_{i=1}^\ell\in\widetilde{\mc K}$ be an element of the kernel of $B(\partial)$.
Looking at the first $\ell$ rows of the matrix $B(\partial)$,
we get the equations
\begin{equation}\label{20140122:eq1}
F_i'=\sum_{j\in J_0^f}[a_i,a_j]F_j
\,\,,\,\,\,\,
i\in J_0^f
\,.
\end{equation}
Let $\mc W_0=\mb F[a_i^{(n)}\,|\,i\in J_0^f,n\in\mb Z_+]\subset\mc W$,
be the algebra of differential polynomials in the differential variables $\{a_i\}_{i\in J_0^f}$,
let $\mc K_0$ be its differential field of fractions,
and let $\widetilde{\mc K}_0$ be its linear closure.
It is a differential subfield of $\widetilde{\mc K}$
with the same subfield of constants $\overline{\mb F}$.
The space of solutions of equation \eqref{20140122:eq1} in $\widetilde{\mc K}^\ell$
is an $\overline{\mb F}$-linear subspace of $\widetilde{\mc K}_0^\ell$
of dimension $\ell$.
Let $E\in\Mat_{\ell\times\ell}\widetilde{\mc K}_0$
be a non-degenerate matrix,
whose columns form a basis of the space of solutions of equation \eqref{20140122:eq1}.
Then, all solutions of equation \eqref{20140122:eq1} have the form
\begin{equation}\label{20140122:eq3}
F=EC\,,
\end{equation}
for some constant vector $C\in\overline{\mb F}^\ell$, 
see e.g. \cite{CDSK13b}.

Next, consider the following $N-\ell-1$ rows of the matrix $B(\partial)$.
We get the equations
\begin{equation}\label{20140122:eq4}
\sum_{i\in J_0^f,k\in J_{-\frac12}}[u_k,a_i]E_{ij}C_j=0
\,\,,\,\,\,\,
k\in J_{-\frac12}
\,.
\end{equation}
The left hand side of equation \eqref{20140122:eq4} lies in 
$\mf g_{-\frac12}\otimes\widetilde{\mc K}_0$.
Hence, we can apply $\xi=(v|\,\cdot)\in\mf g_{\frac12}^*$ to it
(considered as a linear map $\xi:\,\mf g_{-\frac12}\otimes\widetilde{\mc K}_0\to\widetilde{\mc K}_0$):
\begin{equation}\label{20140122:eq5}
\sum_{i\in J_0^f,k\in J_{-\frac12}}
([v,u_k]|a_i)E_{ij}C_j=0
\,,
\end{equation}
for all $v\in\mf g_{\frac12}$ and $k\in J_{-\frac12}$.
Note that $[\mf g_{\frac12},\mf g_{-\frac12}]=\mf g_0$
(it follows by the fact that $\mf g$ is simple and
$\mb Ff\oplus\mf g_{-\frac12}\oplus[\mf g_{\frac12},\mf g_{-\frac12}]\oplus\mf g_{\frac12}\oplus\mb Fe$
is clearly an ideal of $\mf g$).
Since the inner product $(\cdot\,|\,\cdot)$ is non-degenerate on $\mf g_0^f$, 
equation \eqref{20140122:eq5} implies $EC=0$,
from which we get that $C=0$, by the non-degeneracy of the matrix $E$.
Hence $F=0$, as required.

For the last assertion, 
we consider the images of the conserved densities $h_n$ in $\overline{\mc W}$,
and the corresponding variational derivatives 
$\frac{\delta \overline{h}_n}{\delta u}\in\overline{\mc W}^{N-\ell}$.
It follows by \cite[Lem.4.15]{DSKV13a} that they span an infinite dimensional space.
Since $\overline{H}_0$ is a non-degenerate matrix differential operator,
it follows that the images of the elements $P_n$ in $\overline{\mc W}^{N-\ell}$
span an infinite dimensional space as well.
\end{proof}
The Dirac reduced Poisson structures are explicitly as follows
($h,k\in J_{-\frac12}$):
$$
\left\{\begin{array}{l}
\displaystyle{
(\overline{H}_0)_{hk}(\partial)=(e|[u_h,u_k])
}\\
\displaystyle{
(\overline{H}_0)_{NN}(\partial)=-4(x|x)\partial
}\\
\displaystyle{
(\overline{H}_0)_{Nk}(\partial)=(\overline{H}_0)_{kN}(\partial)=0
}
\end{array}\right.
\,,
$$
and
$$
\left\{\begin{array}{l}
\displaystyle{
(\overline{H}_1^D)_{hk}(\partial)
=
\sum_{i\in J_{0}^f}[a_i,u_h]\partial^{-1}\circ[a_i,u_k]
-\frac{(e|[u_h,u_k])}{2(x|x)}f
+(e|[u_h,u_k])\partial^2
}\\
\displaystyle{
(\overline{H}_1^D)_{NN}(\partial)
=
f'+2f\partial-(x| x)\partial^3
}\\
\displaystyle{
(\overline{H}_1^D)_{kN}(\partial)
=
u_k'+\frac32u_k\partial
}\\
\displaystyle{
(\overline{H}_1^D)_{Nk}(\partial)
=
\frac12u_k'+\frac32u_k\partial
}
\end{array}\right.
\,.
$$
The first two conserved densities are
$$
h_0=f
\,\,\,\,
\text{ and }
\,\,\,\, 
h_1=-\frac1{8(x|x)}f^2
-\frac12\sum_{k\in J_{-\frac12}} [f,v^k]u_k^\prime\,,
$$
and the first two equations of the reduced bi-Hamiltonian hierarchy are 
(for $u\in\mf g_{-\frac12}$)
$\frac{du}{dt_0}=u'$,
$\frac{d f}{dt_0}=f'$, and
(cf. \cite[eq.(6.21)]{DSKV13b})
$$
\begin{array}{rcl}
\displaystyle{
\frac{du}{dt_1}
}
&=&
\displaystyle{
u'''
-\frac3{4(x|x)}fu'
-\frac3{8(x|x)}uf'
-\frac12\sum_{i\in J_0^f,k\in J_{-\frac12}}
[a_i,u][a^i,[f,v^k]][f,v_k]
\,,
}\\
\displaystyle{
\frac{d f}{dt_1}
}
&=&
\displaystyle{
\!\!\frac14f'''
-\frac3{4(x|x)}ff'
+\frac32\!\sum_{k\in J_{-\frac12}}\!u_k[f,v^k]''
\,.}
\end{array}
$$

In the case of $\mf g=\mf{sl}_n$, $n\geq3$ (and only in this case) there is a
(unique up to a constant factor) non-zero element $c$ in the center of $\mf g_0^f$.
Then we have additional conserved densities $h_{\tilde n}$, $n\in\mb Z_+$, of which the
first two are
$$
h_{\tilde 0}=c
\,\,\,\,
\text{ and }
\,\,\,\,
h_{\tilde 1}
=\sum_{k\in J_{-\frac12}}u_k[c,[f,v^k]]
\,.
$$
The corresponding first two reduced bi-Hamiltonian equations are
(for $u\in\mf g_{-\frac12}$)
$\frac{du}{dt_{\tilde0}}=[c,u]$,
$\frac{d f}{dt_{\tilde0}}=0$, and
(cf. \cite[eq.(6.22)]{DSKV13b})
$$
\begin{array}{rcl}
\displaystyle{
\frac{du}{dt_{\tilde1}}
}
&=&
\displaystyle{
[c,u]''
-\frac1{2(x|x)}f[c,u]
\,,
}\\
\displaystyle{
\vphantom{\bigg(}
\frac{d f}{dt_{\tilde1}}
}
&=&
\displaystyle{
\sum_{k\in J_{-\frac12}}(u_k[c,[f,v^k]])^\prime
\,.}
\end{array}
$$
Due to the observations at the end of Section \ref{sec:2.4}
all the flows $\frac{d}{dt_n}$ and $\frac{d}{dt_{\tilde m}}$
commute and the local functionals $\tint h_n$, $\tint h_{\tilde m}$
are in involution, for $n,m\in\mb Z_+$.

%%%
\section{Dirac reduced short DS hierarchy}
\label{sec:4.4}

Let $\mf g$ be a simple Lie algebra with 
a non-degenerate symmetric invariant bilinear form $(\cdot\,|\,\cdot)$.
Recall that, by definition, for a short nilpotent element $f\in\mf g$,
and an $\mf{sl}_2$-triple $\{f,h=2x,e\}$, we have the 
$\ad x$-eigenspace decomposition $\mf g=\mf g_{-1}\oplus\mf g_{0}\oplus\mf g_{1}$.
Moreover, we have the orthogonal decomposition
$\mf g_0=\mf g_0^f\oplus[f,\mf g_1]$,
and $\mf g_0^f=\mf g_0^e$, $[f,\mf g_1]=[e,\mf g_{-1}]$.
Let ${}^\sharp:\,\mf g_0\to\mf g_0^f$
%and ${}^\perp:\,\mf g_0\to[f,\mf g_1]$ 
be the corresponding orthogonal projection.
Let $\{a_i\}_{i\in J_0^f}\subset\mf g_0^f$ be an orthonormal basis of $\mf g_0^f$.
Let also $\{u_k\}_{k\in J_{-1}}\subset\mf g_{-1}$ be a basis of
$\mf g_{-1}$ and let $\{u^k\}_{k\in J_{-1}}\subset\mf g_1$
be the dual basis with respect to $(\cdot\,|\cdot)$.

The classical $\mc W$-algebra $\mc W=\mc W(\mf g,f)$ 
is, as differential algebra,
the algebra of differential polynomials in the differential variables
$\{a_i\}_{i\in J_0^f}\subset\mf g_0^f$ and $\{u_k\}_{k\in J_{-1}}\subset\mf g_{-1}$.
The $\lambda$-bracket on generators are given by Table \ref{table3} 
($a,b\in\mf g_0^f$, $u,u_1\in\mf g_{-1}$). 

%%%%%%%% TABLE %%%%%%%%%%%%%%%%%%%%%%%
\begin {table}[h]
\caption{$\lambda$-brackets among generators of $\mc W$ for short nilpotent $f$} \label{table3}
\begin{center}
\begin{tabular}{c||c|c}
\phantom{$\Bigg($} 
$\{\cdot\,_\lambda\,\cdot\}_{z}$
& $b$ & $u_1$ \\
\hline
\hline \phantom{$\Bigg($} 
$a$ & $[a,b]+(a|b)\lambda$ & $[a,u_1]$  \\
\hline \phantom{$\Bigg($}
$u$ & $[u,b]$ & eq.\eqref{20130320:eq2b}
\end{tabular}
\end{center}
\end{table}
%%%%%%%%%%%%%%%%%%%%%%%%%%%%%%%%%%%

The $\lambda$-bracket of $u,u_1\in\mf g_{-1}$ is
\begin{equation}\label{20130320:eq2b}
\begin{array}{l}
\displaystyle{
\{u_\lambda u_1\}_{z}
=\frac12\sum_{k\in J_1}(u\circ u_k)[u_1,u^k]^\sharp
-\frac12\sum_{k\in J_1}(u_1\circ u_k)[u,u^k]^\sharp
}\\
\displaystyle{
+\frac14\sum_{h,k\in J_1}[[e,u_h],[e,u_k]]
[u,u^h]^\sharp[u_1,u^k]^\sharp
-\frac12(\partial+2\lambda)(u\circ u_1)
}\\
\displaystyle{
+\frac14(\partial+2\lambda)\sum_{k\in J_1}[[e,u],[e,u_k]][u_1,u^k]^\sharp
+\frac14\sum_{k\in J_1} [[e,u_1],[e,u_k]] (\partial+\lambda) [u,u^k]^\sharp
}\\
\displaystyle{
-\frac14\left(3\lambda^2+3\lambda\partial+\partial^2\right)[[e,u],[e,u_1]]
+\frac14(e|u\circ u_1)\lambda^3
}\\
\displaystyle{
+z[[e,u],[e,u_1]]^\sharp
-(e|u\circ u_1)z\lambda
\,,
}
\end{array}
\end{equation}
where $u\circ u_1=[[e,u],u_1]$, for all $u,u_1\in\mf g_{-1}$.

Associated to this $1$-parameter family of PVA $\lambda$-brackets
we have compatible Poisson structures  $H_0$ and $H_1$,
defined by 
$\{\cdot\,_\lambda\,\cdot\}_{z}=\{\cdot\,_\lambda\,\cdot\}_{H_1}-z\{\cdot\,_\lambda\,\cdot\}_{H_0}$.
It follows by \cite[Theorem 4.18]{DSKV13a}
that we can find an infinite sequence of linearly independent local
functionals $\tint h_n\in\mc W/\partial\mc W$, $n\in\mb Z_+$, 
starting with $h_0=f$,
satisfying the Lenard-Magri recursive relations \eqref{20130604:eq7}.
The first few integrals of motion and the corresponding equations of the  
integrable hierarchy are computed in \cite[Section 7]{DSKV13b}.

Note that, by Table \ref{table1}, all elements of $\mf g_0^f$ are central
for the Poisson structure $H_0$. 
Therefore, by Theorem \ref{20140117:thm} we have a Dirac modified bi-Poisson structure
$(H_0,H_1^D)$ on $\mc W$
with respect to the constraints $\{a_i\}_{i\in J_0^f}$,
and the induced Dirac reduced bi-Poisson structure 
$(\overline{H}_0,\overline{H}_1^D)$ on $\overline{\mc W}=\mc W/\langle a_i\rangle_{i\in J_0^f}$.
Let $\ell=|J_0^f|$ and $N=|J_0^f|+|J_{-1}|$.
By Definition \ref{20130529:def}
we have $H_1^D=H_1+BC^{-1}B^*$, where
$C(\partial)\in\Mat_{\ell\times\ell}\mc W[\partial]$ 
and 
$B(\partial)\in\Mat_{N\times\ell}\mc W[\partial]$
are matrix differential operators with entries as follows
($i,j\in J_0^f$, $k\in J_{-1}$):
\begin{equation}\label{20140121:eq1b}
\left\{
\begin{array}{l}
B_{ij}(\partial)=C_{ij}(\partial)
=[a_j,a_i]+\delta_{ij}\partial\,\\
B_{kj}(\partial)
=[a_j,u_k]\,.
\end{array}
\right.
\end{equation}
\begin{proposition}\label{20140123:propb}
The Lenard-Magri recursive relations \eqref{20131203:eq1} hold for the bi-Poisson structure
$(H_1^D,H_0)$.
Consequently, we get an induced integrable bi-Hamiltonian hierarchy in $\overline{\mc W}$.
\end{proposition}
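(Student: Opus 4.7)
The plan is to follow the same pattern as the proof of Proposition \ref{20140123:prop} in the minimal case, adapted to the short nilpotent setting. By Proposition \ref{20131203:prop2}, it suffices to verify that $\ker B(\partial) = 0$ over the linear closure $\widetilde{\mc K}$, plus an infinite-dimensional span property for the resulting hierarchy; the key difference from the minimal case is the absence of the extra row indexed by $N$ (there is no analogue of the Virasoro-like generator $L$), and the use of the degree decomposition $\mf g = \mf g_{-1} \oplus \mf g_0 \oplus \mf g_1$ in place of the minimal half-integer grading.

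I will take $F = (F_j)_{j\in J_0^f} \in \widetilde{\mc K}^\ell$ with $B(\partial)F = 0$. Looking at \eqref{20140121:eq1b}, the first $\ell$ rows yield the first-order linear ODE system
\begin{equation*}
F_i' = \sum_{j\in J_0^f}[a_i,a_j] F_j\,,\qquad i\in J_0^f\,,
\end{equation*}
whose space of solutions is an $\ell$-dimensional $\overline{\mb F}$-subspace of $\widetilde{\mc K}_0^\ell$ (with $\widetilde{\mc K}_0$ the linear closure of the fraction field of $\mc W_0 = \mb F[a_i^{(n)}]$), so that $F = EC$ for some non-degenerate $E \in \Mat_{\ell\times\ell}\widetilde{\mc K}_0$ and some constant vector $C \in \overline{\mb F}^\ell$. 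The remaining rows, indexed by $k\in J_{-1}$, give
\begin{equation*}
\sum_{j\in J_0^f}[a_j,u_k]F_j = 0 \quad\text{in } \mf g_{-1}\otimes \widetilde{\mc K}_0\,,\qquad k\in J_{-1}\,.
\end{equation*}
Pairing with an arbitrary $v\in\mf g_1$ through the invariant form and using invariance, this becomes $\sum_{j,i}(a_j|[u_k,v])E_{ji}C_i = 0$. Since $\mf g$ is simple and $\mf g_{-1}\neq 0$, the same ideal argument as in the minimal case shows $[\mf g_{-1},\mf g_1] = \mf g_0$, so as $k$ and $v$ range, $[u_k,v]^\sharp$ exhausts $\mf g_0^f$. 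Non-degeneracy of $(\cdot\,|\,\cdot)$ on $\mf g_0^f$ then forces $EC = 0$, and non-degeneracy of $E$ forces $C = 0$, so $F = 0$.

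For the integrability claim, I will pass to $\overline{\mc W}$ and use \cite[Lem.4.15]{DSKV13a} — exactly as in the minimal case — to conclude that the images of $\frac{\delta h_n}{\delta u}$ in $\overline{\mc W}^{|J_{-1}|}$ span an infinite-dimensional subspace. A short computation with \eqref{20130320:eq2b} shows that in $\overline{\mc W}$ the Poisson structure $\overline{H}_0$ reduces to $(\overline{H}_0)_{hk}(\partial) = -(e|u_h\circ u_k)\partial$, and the Gram matrix $([e,u_h]\,|\,[e,u_k])$ is non-degenerate since $\ad e$ is injective on $\mf g_{-1}$ and $(\cdot|\cdot)$ is non-degenerate on $[e,\mf g_{-1}]\subset\mf g_0$. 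Thus $\overline{H}_0$ is a non-degenerate matrix differential operator, which transfers the linear independence of the variational derivatives to the sequence of symmetries $P_n$, verifying (C4). The main obstacle I expect is bookkeeping to confirm that the commutator $[\mf g_{-1},\mf g_1] = \mf g_0$ argument is valid uniformly for all short $\mf{sl}_2$-triples in a simple $\mf g$; once this standard structural fact is in hand the remainder of the argument is parallel to Proposition \ref{20140123:prop}.
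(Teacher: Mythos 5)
Your proposal is correct and follows exactly the route the paper intends: the paper's own proof of this proposition consists of the single remark that it goes ``along the same lines'' as Proposition \ref{20140123:prop}, and your argument is precisely that adaptation (reduction to $\ker B(\partial)=0$ over $\widetilde{\mc K}$ via Proposition \ref{20131203:prop2}, the linear ODE system and $F=EC$, the ideal argument giving $[\mf g_{-1},\mf g_1]=\mf g_0$, and condition (C4) via \cite[Lem.4.15]{DSKV13a} together with the non-degeneracy of $\overline{H}_0$). The only discrepancy is the sign in $(\overline{H}_0)_{hk}(\partial)$, which the paper states as $+(e|u_h\circ u_k)\partial$; this is immaterial, since only the non-degeneracy of the constant matrix $(e|u_h\circ u_k)=-([e,u_h]|[e,u_k])$ is used.
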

\begin{proof}
It is along the same lines as the proof of Proposition \ref{20140123:prop}.
\end{proof}
The Dirac reduced Poisson structures are explicitly as follows
($h,k\in J_{-1}$):
$$
(\overline{H}_0)_{hk}(\partial)
=
(e|u_h\circ u_k)\partial
\,,
$$
and
$$
(\overline{H}_1^D)_{hk}(\partial)
=
\sum_{i\in J_{0}^f}[a_i,u_h]\partial^{-1}\circ[a_i,u_k]
-\frac12(u_h\circ u_k)^\prime-(u_h\circ u_k)\partial
+\frac14(e|u_h\circ u_k)\partial^3
\,.
$$
The first two equations of the reduced bi-Hamiltonian hierarchy are
(for $u\in\mf g_{-1}$)
$\frac{du}{dt_0}=u'$, and
$$
\frac{du}{dt_1}=
\frac14u'''
+\frac34\sum_{h,k\in J_{-1}}
(u^k*u^h
%[[f,u^k],u^h]
|u)
u_h u_k'
\,,
$$
where $*$ is the Jordan product on $\mf g_1$ defined 
by $a*b=[[f,a],b]$, for every $a,b\in\mf g_1$.
The last equation is, after a rescaling of the variables, the Svinolupov equation
associated to this Jordan product, \cite{Svi91}.
We thus provided a bi-Hamiltonian structure for such equation
and we proved its integrability.

%%%%%%%%%%% BIBLIOGRAFIA %%%%%%%%%%%%%%%%%%%%%%%%

%
\end{document}